\newtheorem{definition}{Definition}
\newtheorem{proposition}{Proposition}
\newtheorem{lemma}[proposition]{Lemma}
\newtheorem{theorem}[proposition]{Theorem}
\newtheorem{remark}{Remark}
\newenvironment{proof}{\noindent \textbf{{Proof~} }}{\hfill $\blacksquare$}
\def\squareforqed{\hbox{\rlap{$\sqcap$}$\sqcup$}}
\def\qed{\ifmmode\squareforqed\else{\unskip\nobreak\hfil
\penalty50\hskip1em\null\nobreak\hfil\squareforqed
\parfillskip=0pt\finalhyphendemerits=0\endgraf}\fi}
\def\endenv{\ifmmode\;\else{\unskip\nobreak\hfil
\penalty50\hskip1em\null\nobreak\hfil\;
\parfillskip=0pt\finalhyphendemerits=0\endgraf}\fi}
\newcounter{example}
\mathchardef\ordinarycolon\mathcode`\:
\def\vcentcolon{\mathrel{\mathop\ordinarycolon}}
\definecolor{darkblue}{RGB}{0,76,156}
\definecolor{darkkblue}{RGB}{0,0,153}
\definecolor{blue2}{RGB}{102,178,255}
\definecolor{darkred}{RGB}{195,0,0}
\newmdenv[skipabove=7pt,
skipbelow=7pt,
backgroundcolor=darkblue!15,
innerleftmargin=5pt,
innerrightmargin=5pt,
innertopmargin=5pt,
leftmargin=0cm,
rightmargin=0cm,
innerbottommargin=5pt,
linewidth=1pt]{tBox}
\newmdenv[skipabove=7pt,
skipbelow=7pt,
backgroundcolor=blue2!25,
innerleftmargin=5pt,
innerrightmargin=5pt,
innertopmargin=5pt,
leftmargin=0cm,
rightmargin=0cm,
innerbottommargin=5pt,
linewidth=1pt]{dBox}
\newmdenv[skipabove=7pt,
skipbelow=7pt,
backgroundcolor=darkred!15,
innerleftmargin=5pt,
innerrightmargin=5pt,
innertopmargin=5pt,
leftmargin=0cm,
rightmargin=0cm,
innerbottommargin=5pt,
linewidth=1pt]{rBox}
\newcommand{\nc}{\newcommand}
\nc{\ketbra}[2]{\lvert#1\rangle\!\langle#2\rvert}
\DeclarePairedDelimiter{\norm}{\lVert}{\rVert}
\DeclarePairedDelimiter{\abs}{\lvert}{\rvert}
\let\oldabs\abs
\def\abs{\@ifstar{\oldabs}{\oldabs*}}
\let\oldnorm\norm
\def\norm{\@ifstar{\oldnorm}{\oldnorm*}}
\DeclarePairedDelimiterX{\infdivx}[2]{(}{)}{%
  #1\;\delimsize\|\;#2%
}
\nc{\proj}[1]{| #1\rangle\!\langle #1 |}
\nc{\avg}[1]{\langle#1\rangle}
\nc{\smfrac}[2]{\mbox{$\frac{#1}{#2}$}}
\nc{\tr}{\operatorname{tr}}
\nc{\ox}{\otimes}
\nc{\dg}{\dagger}
\nc{\dn}{\downarrow}
\nc{\cA}{{\cal A}}
\nc{\cB}{{\cal B}}
\nc{\cC}{{\cal C}}
\nc{\cD}{{\cal D}}
\nc{\cE}{{\cal E}}
\nc{\cF}{{\cal F}}
\nc{\cG}{{\cal G}}
\nc{\cH}{{\cal H}}
\nc{\cI}{{\cal I}}
\nc{\cJ}{{\cal J}}
\nc{\cK}{{\cal K}}
\nc{\cL}{{\cal L}}
\nc{\cM}{{\cal M}}
\nc{\cN}{{\cal N}}
\nc{\cO}{{\cal O}}
\nc{\cP}{{\cal P}}
\nc{\cQ}{{\cal Q}}
\nc{\cR}{{\cal R}}
\nc{\cS}{{\cal S}}
\nc{\cT}{{\cal T}}
\nc{\cU}{{\cal U}}
\nc{\cV}{{\cal V}}
\nc{\cX}{{\cal X}}
\nc{\cY}{{\cal Y}}
\nc{\cZ}{{\cal Z}}
\nc{\cW}{{\cal W}}
\nc{\csupp}{{\operatorname{csupp}}}
\nc{\qsupp}{{\operatorname{qsupp}}}
\nc{\var}{{\operatorname{var}}}
\nc{\rar}{\rightarrow}
\nc{\lrar}{\longrightarrow}
\nc{\polylog}{{\operatorname{polylog}}}
\nc{\wt}{{\operatorname{wt}}}
\nc{\supp}{{\operatorname{supp}}}
\nc{\argmin}{{\operatorname{argmin}}}
\newcommand{\tpmod}[1]{{\@displayfalse\pmod{#1}}}
\def\x{\xi}
\nc{\RR}{{{\mathbb R}}}
\nc{\CC}{{{\mathbb C}}}
\nc{\FF}{{{\mathbb F}}}
\nc{\NN}{{{\mathbb N}}}
\nc{\ZZ}{{{\mathbb Z}}}
\nc{\PP}{{{\mathbb P}}}
\nc{\QQ}{{{\mathbb Q}}}
\nc{\UU}{{{\mathbb U}}}
\nc{\EE}{{{\mathbb E}}}
\nc{\id}{{\operatorname{id}}}
\nc{\CHSH}{{\operatorname{CHSH}}}
\nc{\rU}{\mbox{U}}
\nc{\ob}[1]{#1}
\nc{\SEP}{{\text{\rm SEP}}}
\nc{\NS}{{\text{\rm NS}}}
\nc{\LOCC}{{\text{\rm LOCC}}}
\nc{\PPT}{{\text{\rm PPT}}}
\nc{\EXT}{{\text{\rm EXT}}}
\nc{\Sym}{{\operatorname{Sym}}}
\nc{\ERLO}{{E_{\text{r,LO}}}}
\nc{\ERLOCC}{{E_{\text{r,LOCC}}}}
\nc{\ERPPT}{{E_{\text{r,PPT}}}}
\nc{\ERLOCCinfty}{{E^{\infty}_{\text{r,LOCC}}}}
\nc{\Aram}{{\operatorname{\sf A}}}
\newtheorem{problem}{Problem}
\newcommand{\eps}{\varepsilon}
\def\grd@save@target#1{%
  \def\grd@target{#1}}
\def\grd@save@start#1{%
  \def\grd@start{#1}}
\tikzset{
  grid with coordinates/.style={
    to path={%
      \pgfextra{%
        \edef\grd@@target{(\tikztotarget)}%
        \tikz@scan@one@point\grd@save@target\grd@@target\relax
        \edef\grd@@start{(\tikztostart)}%
        \tikz@scan@one@point\grd@save@start\grd@@start\relax
        \draw[minor help lines,magenta] (\tikztostart) grid (\tikztotarget);
        \draw[major help lines] (\tikztostart) grid (\tikztotarget);
        \grd@start
        \pgfmathsetmacro{\grd@xa}{\the\pgf@x/1cm}
        \pgfmathsetmacro{\grd@ya}{\the\pgf@y/1cm}
        \grd@target
        \pgfmathsetmacro{\grd@xb}{\the\pgf@x/1cm}
        \pgfmathsetmacro{\grd@yb}{\the\pgf@y/1cm}
        \pgfmathsetmacro{\grd@xc}{\grd@xa + \pgfkeysvalueof{/tikz/grid with coordinates/major step}}
        \pgfmathsetmacro{\grd@yc}{\grd@ya + \pgfkeysvalueof{/tikz/grid with coordinates/major step}}
        \foreach \x in {\grd@xa,\grd@xc,...,\grd@xb}
        \node[anchor=north] at (\x,\grd@ya) {\pgfmathprintnumber{\x}};
        \foreach \y in {\grd@ya,\grd@yc,...,\grd@yb}
        \node[anchor=east] at (\grd@xa,\y) {\pgfmathprintnumber{\y}};
      }
    }
  },
  minor help lines/.style={
    help lines,
    step=\pgfkeysvalueof{/tikz/grid with coordinates/minor step}
  },
  major help lines/.style={
    help lines,
    line width=\pgfkeysvalueof{/tikz/grid with coordinates/major line width},
    step=\pgfkeysvalueof{/tikz/grid with coordinates/major step}
  },
  grid with coordinates/.cd,
  minor step/.initial=.2,
  major step/.initial=1,
  major line width/.initial=2pt,
}
\def\problem@s{}
\newcounter{problems@cnt}
\newcommand{\allproblems}{\problem@s}
\definecolor{colortwo}{rgb}{0.4,0.77,0.17}
\definecolor{colorthree}{rgb}{0.01,0.51,0.93}
\nc{\cnot}{\mathrm{CNOT}}
\nc{\paulinorm}[2]{\norm{#1}_{\mathrm{Pauli}, #2}}
\begin{document}
\title{Efficient information recovery from Pauli noise via classical shadow}
\author{Yifei Chen}
\author{Zhan Yu}
\author{Chenghong Zhu}
\author{Xin Wang}
\email{wangxin73@baidu.com}
\affiliation{Institute for Quantum Computing, Baidu Research, Beijing 100193, China}
\begin{abstract}
The rapid advancement of quantum computing has led to an extensive demand for effective techniques to extract classical information from quantum systems, particularly in fields like quantum machine learning and quantum chemistry. However, quantum systems are inherently susceptible to noises, which adversely corrupt the information encoded in quantum systems. In this work, we introduce an efficient algorithm that can recover information from quantum states under Pauli noise. The core idea is to learn the necessary information of the unknown Pauli channel by post-processing the classical shadows of the channel. For a local and bounded-degree observable, only partial knowledge of the channel is required rather than its complete classical description to recover the ideal information, resulting in a polynomial-time algorithm. This contrasts with conventional methods such as probabilistic error cancellation, which requires the full information of the channel and exhibits exponential scaling with the number of qubits. We also prove that this scalable method is optimal on the sample complexity and generalise the algorithm to the weight contracting channel. Furthermore, we demonstrate the validity of the algorithm on the 1D anisotropic Heisenberg-type model via numerical simulations. As a notable application, our method can be severed as a sample-efficient error mitigation scheme for Clifford circuits.
\end{abstract}


\date{\today}
\maketitle

\section{Introduction}
Quantum computers are shown to be able to solve certain problems significantly faster than classical computers~\cite{Childs2010}. However, current quantum devices are susceptible to noise from various sources like the environment, crosstalk, and quantum decoherence, which sets an ultimate time and size limit for quantum computation. Thus the near-term state of quantum computing is referred to as the noisy intermediate-scale quantum (NISQ) era~\cite{preskill2018quantum}. To unleash the potential of near-term quantum computers, a major challenge is to reduce the effect of noise in the quantum system.

One of the most important ingredients in quantum computing is to extract information from a quantum system by measuring the quantum state, which is described as the expectation value of some observable~$O$ of interest. The expectation value of some chosen observable unravels many properties of the quantum system, which is extensively used in many quantum algorithms, including variational quantum eigensolver~\cite{peruzzo2014variational}, quantum approximate optimization algoirhtm~\cite{farhi2014quantum}, and quantum machine learning~\cite{biamonte2017quantum}.

For an ideal quantum state $\sigma$, the information that we seek to obtain is $\tr(O\sigma)$. However, due to the noise present in the quantum computer, the actual state in practice is some \emph{noisy state} $\tilde{\sigma}$ instead.  One of the most standard theoretical models for quantum noise in the study of quantum error correction and mitigation is \emph{Pauli noise}. On one hand, Pauli noise provides a simple model that describes common incoherent noise such as bit-flip, depolarizing, and dephasing. On the other hand, general quantum noise can be mapped to Pauli noise without incurring a loss of fidelity by the technique of randomised compiling~\cite{wallman2016noise, hashim2021randomized}.

The problem of recovering from Pauli noise is that, given access to an unknown Pauli noise $\cP$ and copies of the noisy state $\tilde{\sigma} = \cP(\sigma)$, retrieve the information $\tr(O\sigma)$ for some observable $O$. To recover from a noise $\cP$, a natural way is to construct a map $\cQ$ such that the composed map $\cQ \circ \cP$ is an identity map~\cite{jiang2021physical}, which could covert the noisy state $\cP(\sigma)$ to the ideal state $\sigma$. Such a map is actually not necessary if we are only concerned with the target expectation value $\tr(O\sigma)$ instead of the ideal state $\sigma$. \citet{zhao2023information} proved the necessary and sufficient condition for retrieving the target information from noisy quantum states, and utilised semidefinite programming to determine an optimal protocol for constructing the map $\cD$ that satisfies $\tr(\cD\circ \cP(\sigma) O) = \tr(O \sigma)$. While this method is not restricted to the class of Pauli channels, it requires complete information of the quantum noise. Obtaining the full classical description of an unknown Pauli channel often uses techniques like quantum process tomography~\cite{chuang1997prescription, altepeter2003ancillaassisted, mohseni2008quantumprocess}, typically requiring a number of copies of the channel that scales exponentially in the number of qubits, which is resource-consuming and inefficient. Furthermore, the map $\cD$ proposed in Ref.~\cite{zhao2023information} needs to be simulated via probabilistic sampling, which requires additional resources. How to efficiently recover information from a Pauli channel with no prior information still remains an open and challenging problem.

\begin{figure*}[htbp]
  \includegraphics[width=\textwidth]{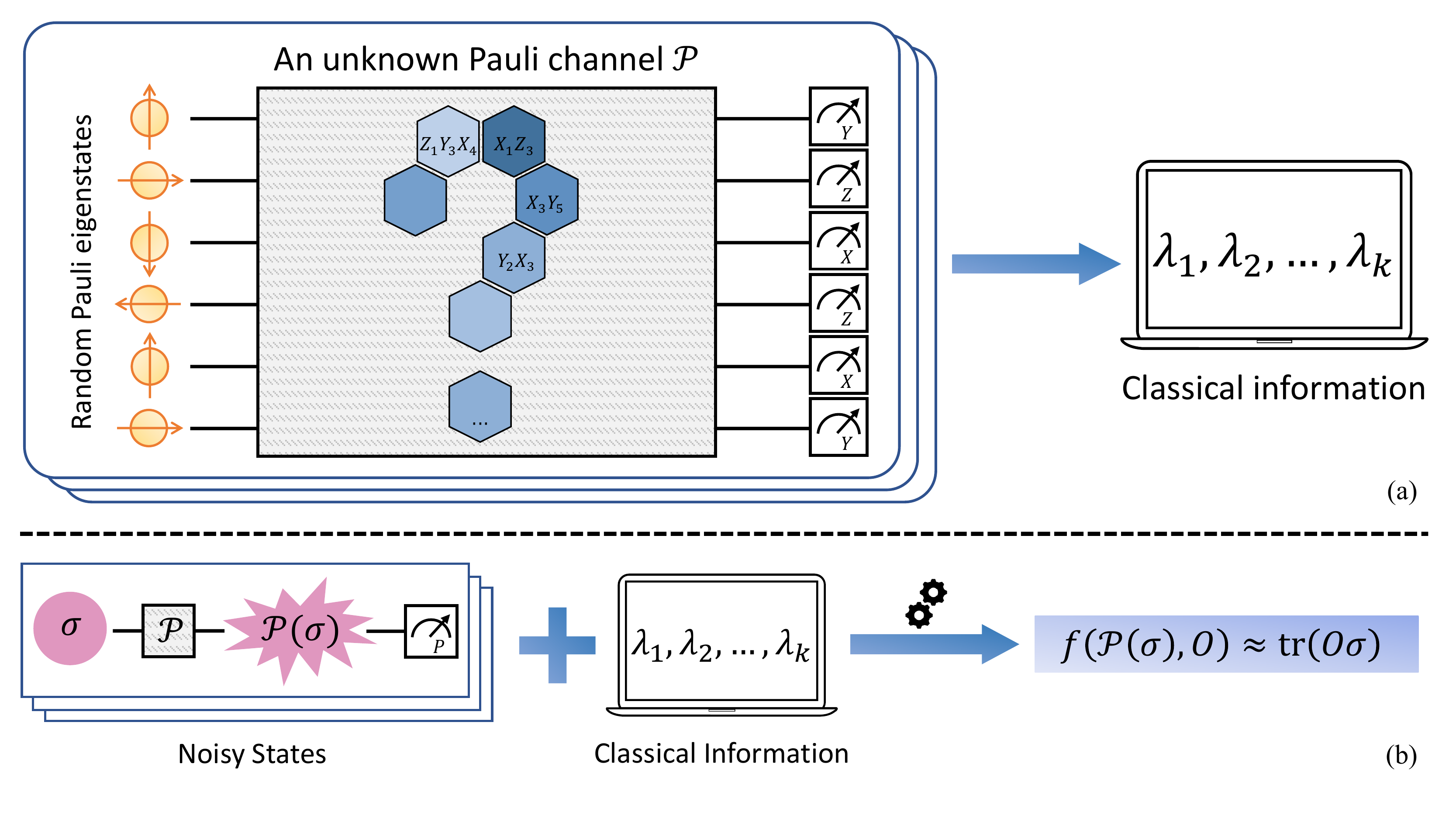}
  \caption{\textbf{Illustration of the algorithm for recovering information from Pauli noise.} (a) The classical information contains eigenvalues of the Pauli channel defined in \cref{subsec:channel prelim}. It is first estimated using the classical shadows of the channel, which is obtained by preparing random Pauli eigenstates as input and measuring the output states in random Pauli basis. (b) Then given any quantum states $\mathcal{P}(\sigma)$ that is subjected to this Pauli noise, estimation of $\tr(O\sigma)$ can be obtained by post-processing measurement results of the noisy state and the classical information we learnt. We note that the same classical information can be reused to recover information for different noisy states.
  }
  \label{fig:main}
\end{figure*}

To make progress towards resolving this open problem, we consider only obtaining partial information of the unknown Pauli channel instead of its full classical description, which would be sufficient for retrieving the expectation value of specific observables. We then note that there are efficient methods such as shadow tomography~\cite{aaronson2018shadow}, classical shadow~\cite{huang2020predicting} and recently proposed quantum estimation algorithms~\cite{huang2022learning, flammia2021pauli, chen2022quantum, caro2022learning} that can estimate these properties of a quantum system using very few quantum resources. This provides us the intuition that the technique of classical shadow tomography has the potential to lead to an efficient method of retrieving information from Pauli noise.

In this work, we propose an efficient algorithm that retrieves the information $\tr(O\sigma)$ from unknown Pauli noise $\cP$ for arbitrary $n$-qubit noisy state $\cP(\sigma)$ and \emph{bounded-degree $k$-local observables} $O$. The main idea is that, when the observable of interest is local and bounded-degree, then only partial eigenvalues of the channel are required to recover the ideal information. The algorithm consists of two steps: learning the necessary information of the unknown Pauli channel and using the information to estimate the expectation value $\tr(O\sigma)$ in classical post-processing. The main scheme is illustrated in \cref{fig:main}. For the learning process, we leverage the techniques of classical shadow tomography~\cite{huang2020predicting, huang2022learning} to estimate the eigenvalues of the Pauli channel up to precision $\epsilon$ with probability $1-\delta$, which only requires $\cO(\log(n^k/\delta)/\epsilon^2)$ copies of the Pauli channel. Furthermore, we utilise information-theoretic techniques~\cite{huang2021informationtheoretic, chen2022quantum} to prove a lower bound on the sample complexity, showing the optimality of our learning algorithm. We could apply classical shadow tomography on the noisy state $\tilde\sigma$ for bounded-degree observables to obtain necessary classical information, the sample complexity of which is also optimal~\cite{huang2022learning}. By post-processing the obtained information from these two steps, we retrieve the target expectation value $\tr(O\sigma)$ in computational time $\cO(n^k\log(n^k))$. 
As a notable application, we apply our method to mitigate Pauli errors in Clifford circuits, which leads to a more sample-efficient Pauli error mitigation scheme than previous methods such as probabilistic error cancellation~\cite{temme2017error}.

We will start by giving some background and introducing the idea of classical shadows in \cref{sec:prelim}, and then present the algorithm for recovering information from Pauli noise in \cref{sec:alg}. In \cref{sec:analysis}, we analyse the sample complexity and computational complexity of our proposed algorithm, which shows the efficiency and optimality of the algorithm. We present a numerical experiment in \cref{sec:numerical_experiments} showing the correctness of our algorithm. An application of the algorithm, which is error mitigation of Clifford circuits, is described in \cref{sec:application}. We discuss the comparison with prior work in \cref{sec:comparison} and conclude with outlook in \cref{sec:conclusion}.

\section{Preliminaries} \label{sec:prelim}
\subsection{Quantum channels and observables} \label{subsec:channel prelim}
In the theory of quantum information~\cite{Wilde2017book,Watrous2011b,Hayashi2017b}, noise of quantum systems are modelled by \emph{quantum channels}, which are completely positive and trace-preserving (CPTP) maps between spaces of operators. An $n$-qubit \emph{Pauli channel} is defined as
\begin{equation}
    \cP(\sigma) \coloneqq \sum_{P\in\{I, X, Y, Z\}^{\otimes n}} p(P) \cdot P \sigma P^\dagger,
\end{equation}
where $P$ is an $n$-fold tensor product of Pauli operators in $\{I, X, Y, Z\}$, and $p$ is a probability distribution on $\{I, X, Y, Z\}^{\otimes n}$. 
A quantum channel is \emph{unital} if it maps the identity operator to the identity operator. The adjoint map of an $n$-qubit quantum channel $\mathcal{N}$ is the unique map $\mathcal{N}^\dagger$ that satisfies 
\begin{equation}
    \tr ( X \mathcal{N}(Y)) = \tr (\mathcal{N}^\dagger(X) Y)
\end{equation}
for all linear operators $X,Y\in \mathcal{L}_{\mathbb{C}^{2n}}$ and $\mathcal{N}^\dagger$ is a completely positive and unital map. In particular,  $\mathcal{N}^\dagger$ maps hermitian operators to hermitian operators.

A Pauli channel $\mathcal{P}$ is in fact self-adjoint, meaning $\mathcal{P}^\dagger = \mathcal{P}$, which can be verified directly from the definition, so throughout this paper, we omit $\dagger$ on $\mathcal{P}^\dagger$. Another observation is that every $n$-qubit Pauli operator $P$ is an eigenoperator of $\mathcal{P}$ since 
\begin{align}
    \mathcal{P} (P) &= \sum_{Q\in \{ I,X,Y,Z\}^{\otimes n}} p(Q) QPQ \\
    &= \sum_{Q\in \{ I,X,Y,Z\}^{\otimes n}} (-1)^{\langle P,Q \rangle } p(Q) Q^2P \\
    &= \sum_{Q\in \{ I,X,Y,Z\}^{\otimes n}} (-1)^{\langle P,Q \rangle} p(Q) P, \\
    \text{where } \langle P,Q \rangle &= \begin{cases}
        0 & \text{if $P$ and $Q$ commute}, \\
        1 & \text{if $P$ and $Q$ anti-commute}.
    \end{cases}
\end{align}
The quantity $\sum_{Q\in \{ I,X,Y,Z\}^{\otimes n}} (-1)^{\langle P,Q \rangle} p(Q)$ is the eigenvalue of $P$ which we denote as $\lambda_P$. We refer to the collection of $\lambda_P$ as eigenvalues of the Pauli channel.

Observables are represented by hermitian operators. An observable $O$ is \emph{$k$-local} if it can be written as a linear combination $O = \sum_j \alpha_j O_j$ where each $O_j$ acts on at most $k$ qubits. An observable is \emph{bounded-degree} if only a constant number of terms $O_j$ in the sum act on each qubit.
The \emph{weight} of an $n$-qubit Pauli operator $P$, denoted as $\abs{P}$, is the number of tensor factors that are not identity $I$. Since Pauli operators form a basis of hermitian operators, any observable $O$ has a unique Pauli decomposition $O = \sum_P \alpha_P P$. 
This allows us to define the \emph{weight} of an observable $O$ to be the maximum weight of Pauli operators whose coefficient $\alpha_P$ is non-zero in the expansion of $O$. 
We also define the Pauli $p$-norm of an observable $O$, denoted as $\paulinorm{O}{p}$, to be the $l_p$-norm of $\bm\alpha$, where $\bm\alpha$ is the vector of Pauli coefficients $\alpha_P$.

\subsection{Classical shadow tomography}
In Ref.~\cite{aaronson2018shadow}, the author showed that for the task of estimating multiple measurement probability of an unknown state, only a sample size that is logarithmic in the number of measurements to predict and the dimension of the quantum state is required. Based on this work, \author{huang2020predicting}~\citet{huang2020predicting} considered the task of predicting $\tr(O\sigma)$ for a set of $O$ simultaneously under some mild conditions and the method proposed is called \emph{classical shadow tomography}. Classical shadows refer to the classical data acquired by performing randomised measurements on an unknown state. This can be realised by randomly selecting a unitary from a given set, applying it to the state and measuring the output state in the computational basis. It was shown that if the set of unitaries satisfies certain conditions, we can always construct an unbiased estimator for the state using classical shadows. There have been various recent progresses exploring applications and extensions of classical shadows, see, e.g., Refs.~\cite{Hadfield2020,Wu2021,Gebhart2023,Coopmans2023,Nguyen2022,Zhao2021a,Huang2022,Elben2022,Low2022,Bu2022,Wan2022,Becker2022}.

A common set of measurements is Pauli measurements. Its estimator is easy to compute and has the following performance guarantee:
\begin{proposition}[Theorem~1 and Proposition~3 in Ref.~\cite{huang2020predicting}] \label{prop:state_pauli_shadow}
Adopting a random Pauli basis primitive, where each random unitary is of the form $U_1\otimes\cdots\otimes U_n$, and each $U_i$ is uniformly selected from the single-qubit Clifford group. Given a collection of $k$-local observables $\{O_1,O_2,\ldots, O_M\}$, accuracy parameters $\epsilon,\delta \in [0,1]$,  then 
    \begin{align}
        N = \mathcal{O}\left( \frac{\log(M/\delta)}{\epsilon^2} \max_i 4^k\lVert O_i \rVert^2_\infty \right)
    \end{align}
    samples are required to simultaneously predict each $\tr(O_i \rho)$ up to accuracy $\epsilon$ with success probability $1-\delta$.
\end{proposition}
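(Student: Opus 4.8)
The plan is to reconstruct the standard classical-shadow argument in three stages: build an unbiased single-sample estimator from the randomized measurement via the inverse measurement channel, control its variance through a locality-sensitive ``shadow norm'' that produces the $4^k$ factor, and then convert the variance bound into the stated sample complexity by a median-of-means estimator together with a union bound over the $M$ observables.

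\emph{Step 1 (unbiased estimator).} For each sample I draw $U=U_1\otimes\cdots\otimes U_n$ with each $U_i$ uniform over the single-qubit Clifford group, measure $U\rho U^\dagger$ in the computational basis to obtain $\ket{b}=\ket{b_1}\otimes\cdots\otimes\ket{b_n}$, and record the snapshot $U^\dagger\proj{b}U=\bigotimes_i U_i^\dagger\proj{b_i}U_i$. The measurement channel $\mathcal{M}(\rho)=\EE[U^\dagger\proj{b}U]$ factorizes as $\mathcal{M}=\mathcal{M}_1^{\otimes n}$ because the ensemble is a product of independent single-qubit ensembles, and each single-qubit Clifford ensemble is a $3$-design, for which $\mathcal{M}_1(X)=\tfrac13 X+\tfrac13\tr(X)I$ with inverse $\mathcal{M}_1^{-1}(X)=3X-\tr(X)I$. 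Defining $\hat\rho\coloneqq\mathcal{M}^{-1}(U^\dagger\proj{b}U)=\bigotimes_i\big(3\,U_i^\dagger\proj{b_i}U_i-I\big)$ then gives $\EE[\hat\rho]=\rho$, so $\hat o_i\coloneqq\tr(O_i\hat\rho)$ is an unbiased estimator of $\tr(O_i\rho)$ for each observable.

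\emph{Step 2 (variance / shadow norm).} The crux is to bound the single-sample variance $\var[\hat o_i]\le\EE[\hat o_i^2]$. I would expand $\EE[\tr(O_i\hat\rho)^2]$ using the tensor-product form of $\hat\rho$ and the fact that $\mathcal{M}^{-1}$ acts qubitwise, so the expectation also factorizes across qubits. Because $O_i$ is supported on at most $k$ qubits, only those $k$ tensor factors contribute nontrivially: on the support the second-moment computation yields a factor bounded by $4$ per qubit (arising from the $3X$ term of $\mathcal{M}_1^{-1}$ combined with the single-qubit second moment), while the identity factors off the support contribute $1$. Together with $\norm{O_i}_\infty$ bounding the relevant operator, this gives the shadow-norm bound $\var[\hat o_i]\le 4^k\norm{O_i}_\infty^2$. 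This step is the main obstacle, since one must carry out the qubitwise moment calculation carefully and verify that locality confines the exponential factor to $4^k$ rather than $4^n$.

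\emph{Step 3 (median-of-means and union bound).} Writing $V\coloneqq\max_i 4^k\norm{O_i}_\infty^2$ for the per-sample variance bound, I split the $N$ samples into $K$ batches of size $N/K$, average within each batch (variance $\le KV/N$), and take the median of the batch means. Chebyshev's inequality makes each batch mean land within $\epsilon$ of $\tr(O_i\rho)$ with probability at least $3/4$ once $N/K=\mathcal{O}(V/\epsilon^2)$, and a Chernoff bound on the number of ``bad'' batches drives the failure probability for a single observable below $\delta/M$ by choosing $K=\mathcal{O}(\log(M/\delta))$. A union bound over the $M$ observables then guarantees all estimates are simultaneously $\epsilon$-accurate with probability $1-\delta$. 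Multiplying the two factors gives $N=\mathcal{O}\big(\log(M/\delta)\,\epsilon^{-2}\max_i 4^k\norm{O_i}_\infty^2\big)$, as claimed.
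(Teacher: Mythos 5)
Your proposal is correct, and it reconstructs exactly the standard argument of Huang, Kueng, and Preskill (unbiased inverse-channel estimator, locality-restricted shadow-norm bound giving the $4^k$ factor, median-of-means plus union bound), which is precisely what the paper relies on: the paper does not reprove this proposition but imports it by combining Theorem~1 and Proposition~3 of Ref.~\cite{huang2020predicting}. The only cosmetic point is that the form of the single-qubit measurement channel needs only the 2-design property, while the 3-design property is what licenses the second-moment calculation in your Step~2.
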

Here, adopting a random Pauli basis primitive means we are measuring each qubit in random Pauli basis. This is realised by applying a random single-qubit Clifford gate to each qubit and measuring in computational basis.
One can obtain this result by combining Theorem~1 and Proposition~3 in Ref.~\cite{huang2020predicting}. Theorem 1 states that the number of samples $N$ is in $\mathcal{O}(\log(M/\delta) / \epsilon^2)$ multiplied by a quantity that depends on the set of random unitaries. Proposition 3 further shows that this quantity for random Pauli measurement is upper bounded by $\max_i 4^k\lVert O_i \rVert^2_\infty$.

\section{Algorithm for Information Recovery} \label{sec:alg}
Firstly, we formally define the problem of information recovery from noisy quantum states. Given access to an unknown $n$-qubit Pauli channel $\mathcal{P}$ and a noisy state $\cP(\sigma)$, for a known bounded-degree $k$-local observable $O$, the task is to provide a function $f(\cP(\sigma), O)$ that approximates the ideal expectation value $\tr(O\sigma)$ within some precision $\epsilon$, i.e.,
\begin{equation}
    \abs{f(\cP(\sigma), O) - \tr(O\sigma)} \leq \epsilon.
\end{equation}
For the target expectation value, the action of the channel $\cP$ on state $\sigma$ can be viewed as its adjoint map acting on the observable $O$, 
\begin{equation}
    \tr(O\mathcal{P}(\sigma)) = \tr(\mathcal{P} (O) \sigma).
\end{equation}
Hence, an estimation for $\tr(O\sigma)$ can be obtained by calculating $\tr (\overleftarrow{O} \mathcal{P}(\sigma))$ for an observable $\overleftarrow{O}$ such that $\mathcal{P}(\overleftarrow{O}) = O$. We have that any Pauli operator $P$ is an eigenoperator of $\mathcal{P}$, i.e., $\mathcal{P}(P) = \lambda_P P$. 
This means that if we obtain the estimated value $\widehat{\lambda}_P$ of $\lambda_P$, we can construct $\overleftarrow{O}=\sum_P \overleftarrow{\alpha}_P P $ by simply taking the Pauli decomposition $O = \sum_P \alpha_P P$ and let $\overleftarrow{\alpha}_P = \alpha_P / \widehat{\lambda}_P$, so that
\begin{equation}
    \mathcal{P} (\overleftarrow{O}) = \sum_P \frac{\lambda_P}{\widehat{\lambda}_P} \alpha_P P \approx \sum_P \alpha_P P = O.
\end{equation}
If $O$ is $k$-local then $\alpha_P $ is zero for every $P$ that has weight greater than $k$.
Hence our estimate for $\tr (O \sigma)$ is given by 
\begin{equation}
    f( \mathcal{P}(\sigma), O ) = \tr( \overleftarrow{O} \mathcal{P}(\sigma)) =  \sum_{P:\abs{P}\leq k} \overleftarrow{\alpha}_P \tr ( P \mathcal{P} (\sigma)).
\end{equation}

We now formalise the concepts and propose the algorithm that can recover information from Pauli channels. The detailed procedure is given in \cref{alg:main}:

\begin{algorithm}[H]
\caption{Information recovery from Pauli noise}
\label{alg:main}
\begin{algorithmic}[1]
    \Require Access to an unknown $n$-qubit Pauli channel $\mathcal{P}$, a bounded-degree $k$-local observable $O=\sum_P \alpha_P P$, copies of unknown noisy state $\mathcal{P}(\sigma)$.
    \Ensure An estimation of $\tr (O\sigma)$.
    \State Prepare $N$ random $n$-fold product Pauli eigenstates $\{\rho_i = \bigotimes_{j=1}^n \ketbra{s_{ij}}{s_{ij}}\}_{i=1}^N$, where each $\ket{s_{ij}}$ is one of the six eigenstates of single-qubit Pauli operators.
    \State Apply the unknown channel $\mathcal{P}$ on the $N$ random states and perform random Pauli measurements on each qubit, obtaining data $\{ \bigotimes_{j=1}^n \ketbra{t_{ij}}{t_{ij}} \}_{i=1}^N$. 
    \State For each $n$-qubit Pauli operator $P$ with $|P| \leq k$, compute $\widehat{\lambda}_P$ by \cref{eqn:xp,eqn:lambdap}.
    \State For $O = \sum_{P: |P|\leq k} \alpha_P P$, let $\overleftarrow{\alpha}_P = \alpha_P / \widehat{\lambda}_P $ for $|P|\leq k$. 
    \State Perform Pauli measurement on the noisy state $\mathcal{P}(\sigma)$ for each Pauli operator $P$ with $\abs{P} \leq k$ and construct the estimation as \[f(\mathcal{P}(\sigma), O) = \sum_{P:|P|\leq k} \overleftarrow{\alpha}_P \tr(P \mathcal{P}(\sigma)).\]
\end{algorithmic}
\end{algorithm}
In steps~$1$ and $2$, we send random Pauli eigenstates into the unknown channel $\cP$ and measure the output state in random Pauli basis. The data acquired in the first two steps, which bear similarity to classical shadows of a quantum state, are indeed classical shadows of the quantum process $\cP$. Next, we use the classical shadows to estimate the eigenvalues of $\mathcal{P}$. Specifically, we calculate the estimated eigenvalues $\widehat{\lambda}_P$ as follows. Let $x_P =(1/3)^{\abs{P}} \lambda_P$, then by \cref{lem:pauli evalue est} below, we can construct an estimator for $x_P$ as
\begin{align}
\widehat{x}_P &= \frac{1}{N} \sum_{i=1}^{N} \tr \big(P \bigotimes_{j=1}^{n} (3 \ketbra{t_{ij}}{t_{ij}} - I) \big) \tr (P \rho_i)\nonumber\\
  &= \frac{1}{N} \sum_{i=1}^N \prod_{j=1}^n \tr(P_j (3 \ketbra{t_{ij}}{t_{ij}} - I)\tr(P_j \ketbra{s_{ij}}{s_{ij}}), \label{eqn:xp}
\end{align}
where $\bigotimes_{j=1}^n (3\ketbra{t_{ij}}{t_{ij}}-I)$ is the unbiased estimator of $\mathcal{P}(\rho_i)$ using classical shadows as presented in Refs.~\cite{huang2020predicting, huang2022learning}. Then we obtain an estimator of the eigenvalue as 
\begin{equation}
    \widehat{\lambda}_P = 3^{|P|} \widehat{x}_P, \label{eqn:lambdap}
\end{equation} 
since $\lambda_P=3^{|P|}x_P$. A special case is when $P=I$, the fact that $\mathcal{P}$ is unital implies $\lambda_I=1$ so there is no need to estimate its value. 

This estimator is obtained by an adapted version of Lemma 16 in Ref.~\cite{huang2022learning}, the full statement of which can be found in \cref{lem:estimate coeff}. The purpose of the original lemma is to extract a particular expansion coefficient of a general $O=\sum_P \alpha_P P$. We focus on the case $O=\mathcal{P} (P) = \lambda_P P$  to extract the eigenvalue $\lambda_P$.

\begin{lemma} \label{lem:pauli evalue est}
    Let $\mathcal{P}$ be an $n$-qubit Pauli channel with eigenvalues $\{\lambda_P\}_P$ so that $\mathcal{P} (P) = \lambda_P P$ for Pauli operators $P\in \{I,X,Y,Z\}^{\otimes n}$, $\mathcal{D}^0$ be the uniform distribution of $n$-fold product Pauli eigenstates. We have 
    \begin{equation}\label{eqn:lambda estimation}
    \underset{\rho\sim \mathcal{D}^0}{\mathbb{E}} \tr (P \mathcal{P}(\rho)) \tr (P\rho) = \bigg( \frac{1}{3} \bigg)^{|P|} \lambda_P.        
    \end{equation}
\end{lemma}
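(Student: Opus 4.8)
The plan is to reduce the claimed expectation to a product of single-qubit averages and then evaluate each factor explicitly. The first move exploits that every Pauli $P$ is an eigenoperator of $\mathcal{P}$ together with the self-adjointness $\mathcal{P}^\dagger=\mathcal{P}$ noted earlier in the excerpt. Using the adjoint identity $\tr(P\mathcal{P}(\rho))=\tr(\mathcal{P}(P)\rho)$ and $\mathcal{P}(P)=\lambda_P P$, the integrand factors as $\tr(P\mathcal{P}(\rho))\tr(P\rho)=\lambda_P\,\tr(P\rho)^2$. Since $\lambda_P$ is a constant it pulls out of the expectation, so the entire problem collapses to showing $\mathbb{E}_{\rho\sim\mathcal{D}^0}\!\big[\tr(P\rho)^2\big]=(1/3)^{\abs{P}}$.

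Next I would exploit the product structure. Write $P=P_1\otimes\cdots\otimes P_n$, and since $\mathcal{D}^0$ draws each single-qubit factor $\rho_j$ independently and uniformly from the six Pauli eigenstates, write $\rho=\rho_1\otimes\cdots\otimes\rho_n$. Then $\tr(P\rho)=\prod_{j}\tr(P_j\rho_j)$, hence $\tr(P\rho)^2=\prod_{j}\tr(P_j\rho_j)^2$, and independence of the factors yields $\mathbb{E}\big[\tr(P\rho)^2\big]=\prod_{j}\mathbb{E}_{\rho_j}\!\big[\tr(P_j\rho_j)^2\big]$.

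The computational heart is the single-qubit average $\mathbb{E}_{\rho_j}\!\big[\tr(P_j\rho_j)^2\big]$. For $P_j=I$ every term equals $1$, giving factor $1$. For $P_j\in\{X,Y,Z\}$ I would evaluate the average over the six eigenstates directly: the two eigenstates of $P_j$ itself contribute $(\pm1)^2=1$, while the four eigenstates belonging to the other two Pauli axes give $\tr(P_j\rho_j)=0$; averaging over all six states therefore yields $2/6=1/3$. Thus each identity slot contributes $1$ and each of the $\abs{P}$ non-identity slots contributes $1/3$, so the product equals $(1/3)^{\abs{P}}$, which combined with the extracted $\lambda_P$ gives the stated right-hand side.

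I do not expect a genuine obstacle here; the argument is a short chain of reductions. The only point requiring care is the single-qubit second-moment computation, where one must confirm that averaging over all \emph{six} eigenstates (not merely the two eigenstates of $P_j$) is what produces the factor $1/3$. This is precisely the normalization underlying the classical-shadow estimator, and it is what fixes the prefactor $x_P=(1/3)^{\abs{P}}\lambda_P$ used to define the estimator in the algorithm.
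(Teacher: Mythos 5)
Your proposal is correct and follows essentially the same route as the paper's proof: convert $\tr(P\mathcal{P}(\rho))$ to $\tr(\mathcal{P}(P)\rho)=\lambda_P\tr(P\rho)$ via the adjoint identity, pull out $\lambda_P$, factor $\tr(P\rho)^2$ over qubits, and evaluate the single-qubit second moment as $1/3$ for each non-identity factor (your explicit count of $2/6$ over the six eigenstates is just the unpacked version of the paper's ``$1$ with probability $1/3$, $0$ with probability $2/3$''). No gaps.
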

We can see that $\widehat{x}_P$ is simply an empirical estimation of the expectation value on the left hand side of \cref{eqn:lambda estimation}. This lemma is derived as an adapted version of Lemma 16 in Ref.~\cite{huang2022learning} and the detailed proof is provided in Appendix \ref{appendix:pauli est}.

In step~$4$, we divide each coefficient $\alpha_P$ by the corresponding estimated eigenvalue to construct the observable $\overleftarrow{O}$ that can achieve $\abs*{ \tr(\overleftarrow{O} \mathcal{P}(\sigma)) - \tr(O\sigma) }\leq \epsilon$. In step~$5$, we  obtain the estimation $\tr(\overleftarrow{O} \mathcal{P}(\sigma))$ by performing Pauli measurements on the noisy state $\mathcal{P}(\sigma)$. Note that steps~$1$ to $3$ do not require information about $O$ and $\sigma$ but a promise of $O$ being $k$-local, hence can be done beforehand. Whenever we are given a new observable $O$ and noisy state $\mathcal{P}(\sigma)$, we only need to re-apply steps~$4$ and $5$.

In step $5$ of the algorithm, we require the values of $\tr(P \mathcal{P}(\sigma))$ for all Pauli operators $P$ such that $\abs{P}\leq k$. We denote the total number of such $P$ to be $T(n,k)$, which satisfies $T(n,k) = \mathcal{O}(n^k)$. Note that it is a problem which can be solved using classical shadows of quantum state. By \cref{prop:state_pauli_shadow}, we can estimate each $\tr(P \mathcal{P}(\sigma))$ up to $\frac{\epsilon}{2T(n,k)}$ using 
\begin{equation}
    \mathcal{O}\left(\frac{T(n,k)^2 \log(T(n,k))}{\epsilon^2}\right) = \mathcal{O}\left(\frac{n^{2k}\log(n^k)}{\epsilon^2}\right)
\end{equation} 
copies of $\mathcal{P}(\sigma)$. This has been proven to be optimal for this prediction task~\cite{huang2022learning}. Since this task has been studied thoroughly and the complexity is polynomial in $n$, for simplicity, we assume that we have $\tr(P \mathcal{P}(\sigma))$ and we do not consider sample complexity for obtaining $\tr(P \mathcal{P}(\sigma))$ for the rest of the paper.

\section{Analysis of sample complexity}\label{sec:analysis}
Next, we analyse the sample complexity and computational complexity of our proposed algorithm.
\begin{theorem}\label{Thm:Main}
Given an unknown $n$-qubit Pauli channel $\mathcal{P}$, a noisy state $\cP(\sigma)$, and an $n$-qubit bounded-degree $k$-local observable $O$ where $k=\mathcal{O}(1)$ and $\lVert O \rVert = 1$. For $\epsilon, \delta > 0$, there exists an algorithm that uses $N = \mathcal{O}(\log (n^{k}/ \delta)/\epsilon^2)$ accesses to the channel to obtain a function $f(\mathcal{P}(\sigma), O)$ such that
\begin{equation}\label{Error bound}
\abs{f(\mathcal{P}(\sigma),O) - \tr(O\sigma)} \leq \epsilon 
\end{equation}
with probability at least $1-\delta$. The computation time is $\mathcal{O}(n^k \log (n^{k}))$.
\end{theorem}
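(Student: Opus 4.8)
The plan is to separate the two ingredients of \cref{alg:main}: the error incurred by using estimated eigenvalues $\widehat\lambda_P$ in place of the true $\lambda_P$, and the sample cost of producing those estimates. I would first rewrite the target in a form that exposes the eigenvalues. Since $\cP$ is self-adjoint and every Pauli is an eigenoperator, $\tr(P\cP(\sigma))=\tr(\cP(P)\sigma)=\lambda_P\tr(P\sigma)$, so for the $k$-local $O=\sum_P\alpha_P P$ we have the exact identity $\tr(O\sigma)=\sum_{P:\abs{P}\le k}\frac{\alpha_P}{\lambda_P}\tr(P\cP(\sigma))$. The estimator of \cref{alg:main} is the same sum with $\lambda_P$ replaced by $\widehat\lambda_P$, so the error is
\[
f(\cP(\sigma),O)-\tr(O\sigma)=\sum_{P:\abs{P}\le k}\alpha_P\,\tr(P\cP(\sigma))\left(\frac{1}{\widehat\lambda_P}-\frac{1}{\lambda_P}\right),
\]
where only the $\mathcal{O}(n^k)$ terms with $\abs{P}\le k$ survive because $O$ is $k$-local.

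Next I would convert this into a requirement on the per-eigenvalue accuracy. Using $\abs{\tr(P\cP(\sigma))}\le 1$ (as $\cP(\sigma)$ is a state and $P$ a Pauli) together with $\abs{1/\widehat\lambda_P-1/\lambda_P}=\abs{\lambda_P-\widehat\lambda_P}/(\abs{\lambda_P}\abs{\widehat\lambda_P})$, a triangle-inequality bound gives $\abs{f(\cP(\sigma),O)-\tr(O\sigma)}\le C\,\paulinorm{O}{1}\max_P\abs{\widehat\lambda_P-\lambda_P}$, where $C$ absorbs a lower bound on the magnitudes of the relevant eigenvalues. This lower bound is the natural invertibility/recoverability hypothesis on $\cP$: it guarantees that $1/\lambda_P$ is well behaved and that $\widehat\lambda_P$ stays bounded away from $0$ once it is close enough to $\lambda_P$. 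It then suffices to estimate each $\lambda_P$ with $\abs{P}\le k$ to accuracy $\epsilon'=\Theta(\epsilon)$, provided the surviving coefficient mass $\paulinorm{O}{1}$ is $\mathcal{O}(1)$ — which is precisely where the bounded-degree and $k$-locality hypotheses, with $k=\mathcal{O}(1)$ and $\norm{O}=1$, must enter.

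For the sample complexity I would invoke \cref{lem:pauli evalue est}: the single-shot quantity $3^{\abs{P}}\tr\big(P\bigotimes_j(3\ketbra{t_{ij}}{t_{ij}}-I)\big)\tr(P\rho_i)$ is an unbiased estimator of $\lambda_P$, and its magnitude is at most $9^{\abs{P}}\le 9^{k}=\mathcal{O}(1)$. Averaging $N$ such shots, Hoeffding's inequality yields $\Pr[\abs{\widehat\lambda_P-\lambda_P}\ge\epsilon']\le 2\exp(-\Omega(N\epsilon'^2/9^{2k}))$. The decisive observation is that the same $N$ classical shadows produced in Steps~1--2 provide a single-shot estimate for every $P$ at once, so a union bound over the $T(n,k)=\mathcal{O}(n^k)$ Paulis of weight at most $k$, with per-operator failure probability $\delta/T(n,k)$, renders all estimates $\epsilon'$-accurate with probability $1-\delta$ while still using only $N=\mathcal{O}(\log(T(n,k)/\delta)/\epsilon'^2)=\mathcal{O}(\log(n^k/\delta)/\epsilon^2)$ channel accesses. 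Combining this with the previous paragraph delivers the error bound \eqref{Error bound}.

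Finally, for the runtime I would tally the cost of computing $\widehat\lambda_P$ for all $\mathcal{O}(n^k)$ Paulis of weight at most $k$: each estimate is an average of $N$ shots, each a product over at most $k$ single-qubit factors, i.e.\ $\mathcal{O}(kN)$ arithmetic operations per eigenvalue, while the reweighting and summation of Steps~4--5 cost $\mathcal{O}(n^k)$; with $N=\mathcal{O}(\log(n^k))$ this totals $\mathcal{O}(n^k\log(n^k))$. The step I expect to be the main obstacle is the second paragraph: one must argue that fixed accuracy $\Theta(\epsilon)$ on each eigenvalue — rather than $\Theta(\epsilon/\mathrm{poly}(n))$ — is enough, which hinges on controlling $\paulinorm{O}{1}$ through the bounded-degree and $k$-locality assumptions and on justifying the eigenvalue lower bound that keeps $1/\widehat\lambda_P$ stable. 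Without either, the error accumulated across $\mathcal{O}(n^k)$ terms, or a vanishing eigenvalue in some $1/\widehat\lambda_P$, would break the claimed $\mathcal{O}(\log(n^k/\delta)/\epsilon^2)$ scaling.
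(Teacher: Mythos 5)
Your proposal is correct and follows essentially the same route as the paper's proof: an unbiased single-shot estimator for each $\lambda_P$ via \cref{lem:pauli evalue est}, Hoeffding's inequality plus a union bound over the $\mathcal{O}(n^k)$ Paulis of weight at most $k$, and an error bound proportional to $\paulinorm{O}{1}\max_P\abs{\widehat\lambda_P-\lambda_P}/\lambda^{(k)}_{\min}$ (your exact treatment of $1/\widehat\lambda_P-1/\lambda_P$ is if anything cleaner than the paper's first-order expansion). The one step you flag as the main obstacle---showing $\paulinorm{O}{1}=\mathcal{O}(1)$---is exactly what the paper imports as \cref{lem:Pauli norm} (Corollary~12 of Ref.~\cite{huang2022learning}), which gives $\paulinorm{O}{1}\leq 3\norm{O}/C(k,d)$ for bounded-degree $k$-local observables.
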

 
This theorem provides a strong guarantee that both sample complexity and total computation time scale polynomially as the number of qubits increases, meaning that our algorithm is practical and scalable for large quantum systems. 

By using the formula in \cref{lem:pauli evalue est} and Hoeffding's inequality, the sample complexity has direct connection with how accurate we need to estimate the eigenvalues, which is denoted by $\epsilon'$. We then use a series of bounding to create connection between $\epsilon'$ and $\epsilon$ which translates to the final sample complexity. The detailed proof can be found in \cref{appendix:proof main}.

In the first part of the algorithm where we learn the eigenvalues of the channel, the number of channels that we use is in $\mathcal{O}(\log (n^k)/\epsilon^2)$. We now use information-theoretic techniques in Refs.~\cite{huang2021informationtheoretic,fawzi2023lower,chen2022quantum} to prove a lower bound on the sample complexity for achieving this channel learning task.
\begin{proposition}\label{prop:lower bound}
Let $\mathbb{K}$ denote the set of $n$-qubit Pauli operators whose weight is at most $k$, i.e., $\mathbb{K} \coloneqq \Set{ P \in \{I, X, Y, Z\}^{\otimes n} : \abs{P}\leq k}$. Given an unknown Pauli channel $\mathcal{P}$, if an algorithm can estimate the eigenvalue of every $P\in \mathbb{K}$ up to accuracy $\epsilon$ from $N$ access of $\mathcal{P}$, where arbitrary input state can be prepared to be sent through the channel and arbitrary POVM can be used to measure the output state during each access, then $N=\Omega(\log(n^k)/\epsilon^2)$.
\end{proposition}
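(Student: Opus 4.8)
The plan is to recast eigenvalue estimation as a multiple-hypothesis identification problem and then bound, via an information-theoretic argument, how much an arbitrary (adaptive, ancilla-assisted) sequence of $N$ channel uses can reveal about the hidden hypothesis. First I would fix a reference channel $\mathcal{P}_0$, the completely depolarizing channel, which has $\lambda_P = 0$ for every $P\neq I$. For each non-identity $P^*\in\mathbb{K}$ I define $\mathcal{P}_{P^*}$ by raising the single eigenvalue $\lambda_{P^*}$ to $4\epsilon$ while keeping all other non-identity eigenvalues at $0$; concretely this is the Pauli channel with $p(Q)=2^{-2n}\bigl(1+4\epsilon\,(-1)^{\langle P^*,Q\rangle}\bigr)$, which is a valid probability distribution whenever $\epsilon\le 1/4$. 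There are $M=\abs{\mathbb{K}}-1=\Theta(n^k)$ such channels, and the $4\epsilon$ gap is chosen so that any algorithm estimating every weight-$\le k$ eigenvalue to accuracy $\epsilon$ can identify the perturbed $P^*$ (the unique index whose estimate exceeds $2\epsilon$) with the same success probability. Thus solving the estimation task solves the $M$-ary identification task.

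Let $J$ be uniform over these $M$ hypotheses and let $X=(X_1,\dots,X_N)$ be the full transcript of measurement outcomes produced across the $N$ accesses. Taking $\delta$ to be a fixed constant below $1/3$, the reduction above combined with Fano's inequality forces the mutual information to obey $I(J;X)=\Omega(\log M)=\Omega(\log n^k)$.

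On the other side I would upper bound $I(J;X)$. By the chain rule $I(J;X)=\sum_{i=1}^{N} I(J;X_i\mid X_{<i})$, and for each round, conditioned on the history (which fixes the adaptively chosen, possibly entangled input $\rho$ and POVM $\{E_x\}$), I would use $I(J;X_i\mid\mathrm{hist})\le \tfrac1M\sum_{P^*} D_{\mathrm{KL}}(\Pr_{P^*}\|\Pr_0)\le \tfrac1M\sum_{P^*}\chi^2(\Pr_{P^*}\|\Pr_0)$, where $\Pr_{P^*},\Pr_0$ are the outcome distributions under $\mathcal{P}_{P^*},\mathcal{P}_0$. The core computation is to show $\chi^2(\Pr_{P^*}\|\Pr_0)\le 16\epsilon^2$ uniformly over all inputs and measurements. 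Here the depolarizing reference pays off: $(\mathcal{P}_0\ox\id)(\rho)=(I/2^n)\ox\rho_A$ and $(\mathcal{P}_{P^*}\ox\id)(\rho)-(\mathcal{P}_0\ox\id)(\rho)=(4\epsilon/2^n)\,P^*\ox\rho_{P^*}$ with $\rho_{P^*}=\tr_S[(P^*\ox I)\rho]$, so a Cauchy--Schwarz estimate on $\sum_x \tr[E_x(P^*\ox\rho_{P^*})]^2/\tr[E_x(I\ox\rho_A)]$ collapses the $\chi^2$ to $16\epsilon^2\,\tr[\rho_{P^*}\rho_A^{-1}\rho_{P^*}]$, and a purification argument using $(P^*)^2=I$ gives $\tr[\rho_{P^*}\rho_A^{-1}\rho_{P^*}]\le 1$ for every input. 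Hence $I(J;X)\le 16N\epsilon^2$.

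Combining the two bounds yields $\Omega(\log n^k)\le 16N\epsilon^2$, i.e. $N=\Omega(\log(n^k)/\epsilon^2)$. I expect the per-access estimate to be the main obstacle: one must certify that no adaptive, entangled, ancilla-assisted strategy extracts more than $\mathcal{O}(\epsilon^2)$ bits about the hidden index from a single channel use. The completely depolarizing reference, which makes every output maximally mixed on the system register up to an $\mathcal{O}(\epsilon)$ signal, together with the inequality $\tr[\rho_{P^*}\rho_A^{-1}\rho_{P^*}]\le 1$, is exactly what makes this bound uniform over all strategies; the only remaining technicality is rank-deficient $\rho_A$, which is handled by a short support/purification argument without affecting the constant.
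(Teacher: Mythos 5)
Your proposal is correct and follows the same skeleton as the paper's proof in Appendix~C: a packing of $\Theta(n^k)$ Pauli channels obtained by perturbing a single low-weight eigenvalue of the completely depolarizing channel by $\Theta(\epsilon)$, a reduction from estimation to $M$-ary identification, Fano's inequality to force $I(J;X)=\Omega(\log n^k)$, and a chain-rule decomposition with a per-access information bound of $O(\epsilon^2)$ obtained from a second-moment ($\chi^2$-type) computation. The cosmetic differences (one-sided $+4\epsilon$ perturbations versus the paper's $\pm 2\epsilon$; KL-to-$\chi^2$ versus the paper's direct $\log x\le\log y+(x-y)/y$ expansion) do not change anything. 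The one substantive difference is scope: the paper's per-access bound is computed for an $n$-qubit input and a rank-one POVM on the output alone, and the remark after the proposition explicitly declines to claim the bound for ancilla-assisted strategies, whereas you prove the per-access bound for entangled system--ancilla inputs via $\chi^2\le 16\epsilon^2\,\tr[\rho_{P^*}\rho_A^{-1}\rho_{P^*}]\le 1\cdot 16\epsilon^2$ (the contraction lemma from Chen et al.). That is a genuine strengthening, and it also repairs a small gap in the paper's treatment of adaptivity: the paper uses $I(X{:}I_t\mid I_{<t})\le I(X{:}I_t)$, which is only valid for non-adaptive strategies, while your conditioning on the full history handles adaptive ones. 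Two minor points to tidy up: after conditioning on the history the posterior on $J$ is no longer uniform, so you should state the conditional mutual information bound with the posterior weights (harmless here, since your $16\epsilon^2$ bound is uniform over $P^*$); and you should record the constraint $\epsilon\le 1/4$ (or absorb it by noting the bound is trivial otherwise) so that $p(Q)=2^{-2n}(1+4\epsilon(-1)^{\langle P^*,Q\rangle})$ is a valid distribution.
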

The proof is given in \cref{appendix: optimality}. The lower bound obtained matches our upper bound, which shows the optimality of our algorithm when channel can only be used once in each access. On the other hand, if more quantum resources are available, such as ancilla qubits or even using multiple copies of channel at the same time, similar to what is proven in Ref.~\cite{chen2022quantum}, more efficient algorithms could be possible. 

\begin{remark}
\Cref{alg:main} involves the collection of Pauli shadows of the channel. In \cref{appendix: clifford shadow}, we further explore the utilisation of Clifford shadows and we show that Clifford shadows cannot provide any improvement for sample complexity under the assumption that the locality of the observable $k=\mathcal{O}(1)$.  
\end{remark}

\section{Numerical experiments} \label{sec:numerical_experiments}
Estimating the expectation value $\tr(O\sigma)$ has many applications in quantum information processing. For example, variational quantum eigensolvers~\cite{peruzzo2014variational} are proposed for estimating the ground state energy, which requires tuning the parameters to minimise the expectation value as a cost function. Subsequently, we use numerical simulations to demonstrate that our algorithm can estimate $\tr(\overleftarrow{O} \cP(\sigma)) \approx \tr(O\sigma)$. In our simulations, the 2-qubit product Pauli channel is utilised. We also consider the noise level in current quantum devices and choose the noise parameters of the channel as shown in \cref{tbl:pauli-channel-parameter}. Each row in the table corresponds to the noise parameters associated with each qubit. The target observable we choose is the 1D anisotropic Heisenberg-type Hamiltonian with $n$ sites.
When the periodic boundary condition is closed and the magnetic field is included, the observable can be expressed as
\begin{equation}
    O = \sum_{j=1}^{n-1}(J_x\sigma_j^x \sigma_{j+1}^x + J_y\sigma_j^y \sigma_{j+1}^y + J_z\sigma_j^z \sigma_{j+1}^z + h\sigma_j^z),
\end{equation}
where $\sigma_i^a$ represents Pauli-$a$ operator acting on the $i$\textsuperscript{th} qubit, $J_x$, $J_y$, $J_z$ are the spin coupling strengths and $h_z$ is the magnetic field applied along the $z$ direction. We randomly choose these coefficients to be $J_x=0.27$, $J_y=0.42$, $J_z=0.76$ and $h_z=0.6$, then normalise it to have $\lVert O\rVert_\infty = 1$.

Then the algorithm starts by collecting classical shadows of the channel and we choose to sample $[1,20]\times 10^4$ classical shadows for estimating eigenvalues, and divide $\alpha_P$ by the estimated eigenvalue. To estimate the accuracy with the increased number of classical shadows, we randomly generate 500 Haar-random $n$-qubit states $\{ \sigma_1, \ldots, \sigma_{500} \}$ and directly calculate the ideal expectation value with respect to $H$. We then send the state through the channel $\mathcal{P}$ and compute the mean absolute error (MAE) for not performing any post-processing and performing our information recovery algorithm, respectively:
\begin{align}
    \parbox{65pt}{MAE with no \\post-processing} &= \frac{1}{500} \sum_{i=1}^{500} \abs{ \tr(O \mathcal{P}(\sigma_i)) - \tr(O \sigma_i) }, \\
    \parbox{65pt}{MAE with\\ post-processing} &= \frac{1}{500} \sum_{i=1}^{500} \abs{ f(\mathcal{P}(\sigma_i), O) - \tr(O \sigma_i) }.
\end{align}
Finally, to make comparisons, we find the ratio between the two MAEs,
\begin{equation}
    r = \frac{\text{MAE with post-processing}}{\text{MAE with no post-processing}},
\end{equation}
to show how much our algorithm improves the estimation. For every $10^4$ classical shadows, we repeat the above procedures 10 times and average over $r$. The experiment results are shown in \cref{fig:n=2}. It can be seen clearly that the ratio $r$ decreases as the number of samples increases, which indicates more precise classical information we are extracting.

\begin{table}[ht]
\begin{tabular}{|c|c|c|c|c|}
\hline
qubit number & $p_I$ & $p_X$ & $p_Y$ & $p_Z$  \\
 \hline
1 & 0.75 & 0.10 & 0.10 & 0.05 \\
2 & 0.77 & 0.09 & 0.09 & 0.05 \\
\hline
\end{tabular}
\caption{Noise parameters for each qubit.}
\label{tbl:pauli-channel-parameter}
\end{table}

\begin{figure}[h!]
    \centering
    \includegraphics[width=\linewidth]{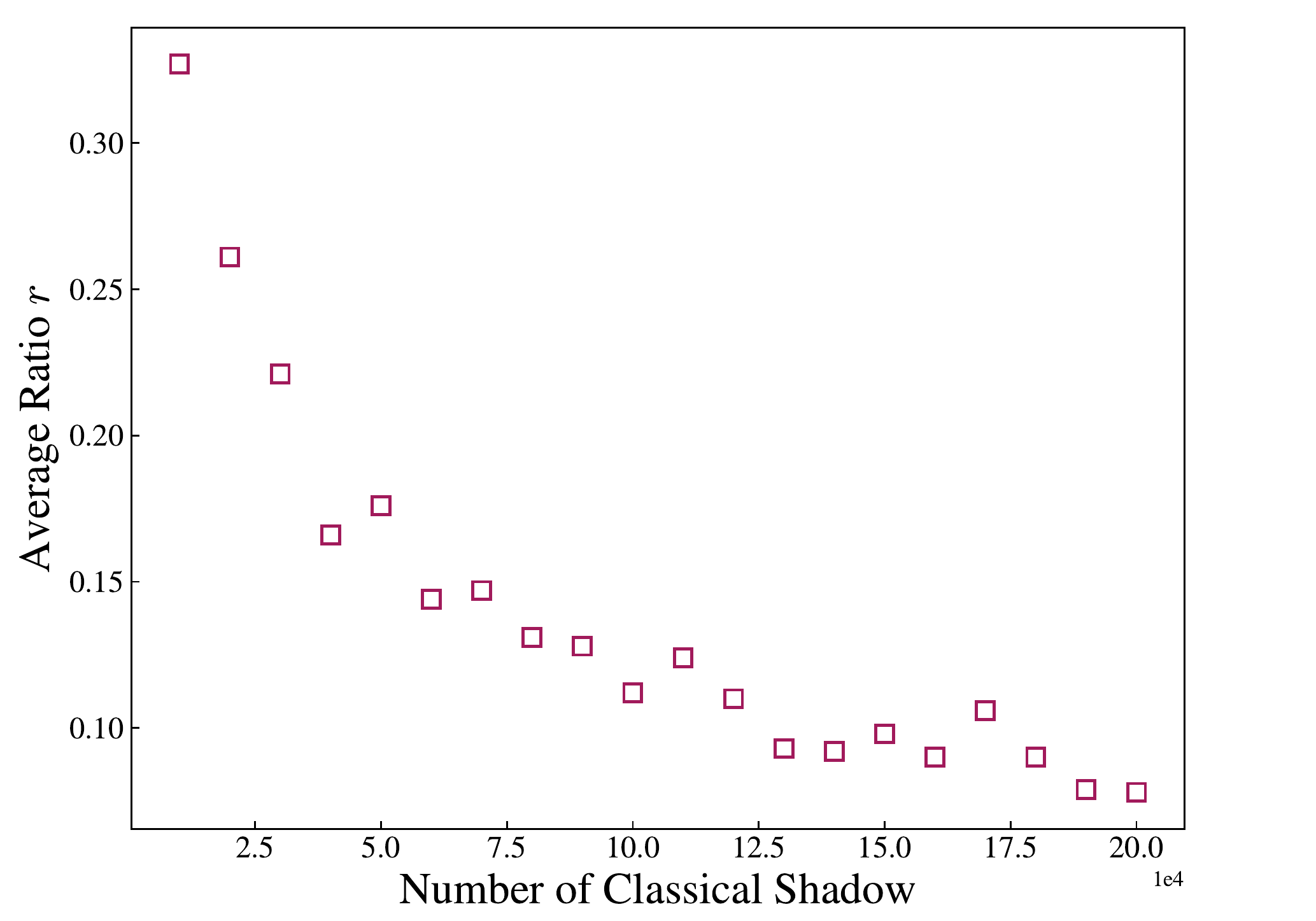}
    \caption{Information recovery on a 2-qubit anisoptropic Heisenberg-type model under product Pauli channels. The x-axis indicates the number of classical shadows we use, with a unit of $10^4$. The y-axis represents the ratio of error between when post-processing is performed and when no post-processing is performed. In comparison to extracting the expectation value from noisy states without post-processing, our algorithm provides a moderate increase in precision.}
    \label{fig:n=2} 
\end{figure}

Hence, we have demonstrated that our method can successfully predict the expectation value from noisy states and showed that the accuracy improves with more samples.

\section{Extensions in channel types}\label{sec:channel type}
In the previous section, we have demonstrated that our algorithm can recover information efficiently for Pauli channels. It is a natural question to ask what other types of channels lead to this efficiency.
A sufficient criterion is that the adjoint matrix of the channel written in Pauli basis is upper block triangular, where each block contains Pauli operators with the same weight and the block is arranged in increasing order of weight. An equivalent description is that the channel is \emph{weight contracting}, in the sense that the weight of the operator does not increase under the action of the adjoint map. One such example is product channel. In \cref{apx: extension}, we provide an updated version of the algorithm that can recover information from weight contracting channel which has the following updated performance guarantee:
\begin{proposition}\label{prop:extension}
Given an unknown $n$-qubit weight contracting channel $\mathcal{E}$, a noisy state $\cE(\sigma)$, and an $n$-qubit bounded-degree $k$-local observable O with $\norm{O}_\infty = 1$. For $\eps, \delta > 0$, there exists an algorithm that uses $N = \mathcal{O}(n^{2k}\log (n^{2k}/ \delta)/\epsilon^2)$ access to the channel to obtain a function $f$ such that
\begin{equation}
\lvert f(\mathcal{E}(\sigma)) - \tr(O\sigma) \rvert \leq \epsilon 
\end{equation}
with probability at least $1-\delta$. The computation time is $\mathcal{O}(n^{4k}\log(n^{2k}))$.
\end{proposition}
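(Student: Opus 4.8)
The plan is to follow the Pauli-channel argument behind \cref{Thm:Main}, but to replace the coefficient-wise eigenvalue rescaling $\alpha_P\mapsto\alpha_P/\widehat\lambda_P$ with the inversion of a \emph{transfer matrix} acting on the low-weight subspace. Expanding the adjoint map in the orthogonal Pauli basis, write $\cE^\dagger(P)=\sum_Q M_{QP}\,Q$ with $M_{QP}=2^{-n}\tr\!\big(Q\,\cE^\dagger(P)\big)=2^{-n}\tr\!\big(\cE(Q)\,P\big)$, and let $V_k=\operatorname{span}\{P:\abs{P}\le k\}$, a space of dimension $T(n,k)=\cO(n^k)$. The weight-contracting hypothesis says exactly that $M_{QP}=0$ whenever $\abs{Q}>\abs{P}$, so $\cE^\dagger(V_k)\subseteq V_k$ and the restriction $M\coloneqq[\cE^\dagger]\big|_{V_k}$ is block upper triangular in increasing order of weight. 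First I would use this invariance: assuming $M$ is invertible (equivalently, each weight-preserving diagonal block is invertible), the linear system $M\,\overleftarrow{\bm\alpha}=\bm\alpha$ has a solution supported on $V_k$, giving a $k$-local operator $\overleftarrow{O}=\sum_{\abs{Q}\le k}\overleftarrow\alpha_Q Q$ with $\cE^\dagger(\overleftarrow{O})=O$ and hence $\tr(\overleftarrow{O}\,\cE(\sigma))=\tr(\cE^\dagger(\overleftarrow{O})\sigma)=\tr(O\sigma)$ exactly. Crucially $\overleftarrow{O}$ remains $k$-local, so only the $\cO(n^k)$ values $\tr(P\cE(\sigma))$ with $\abs{P}\le k$ are needed in the final read-out, and the triangular structure lets one solve for $\overleftarrow{\bm\alpha}$ by block back-substitution.

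Next I would estimate the entries of $M$ on $V_k$ from classical shadows of the channel, generalising \cref{lem:pauli evalue est}. The same second-moment computation over $\rho\sim\cD^0$ gives
\begin{equation}
\underset{\rho\sim\cD^0}{\EE}\,\tr\!\big(P\,\cE(\rho)\big)\,\tr(Q\rho)=\Big(\tfrac13\Big)^{\abs{Q}} M_{QP},
\end{equation}
because the cross terms average to zero and the diagonal contributes $(1/3)^{\abs{Q}}$. This yields the unbiased estimator $\widehat M_{QP}=3^{\abs{Q}}\tfrac1N\sum_i\tr\!\big(P\bigotimes_j(3\ketbra{t_{ij}}{t_{ij}}-I)\big)\tr(Q\rho_i)$ for each of the $\cO(n^{2k})$ entries with $\abs{P},\abs{Q}\le k$. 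Applying the classical-shadow concentration of \cref{prop:state_pauli_shadow} with $M=\cO(n^{2k})$ prediction targets together with a union bound, all these entries can be learned to accuracy $\eps'$ with probability $1-\delta$ from $N=\cO(\log(n^{2k}/\delta)/\eps'^2)$ accesses of $\cE$.

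The final and most delicate step is propagating this entrywise error through the inverse, and this is where I expect the main obstacle. Setting $\widehat{\overleftarrow{\bm\alpha}}=\widehat M^{-1}\bm\alpha$ and $f(\cE(\sigma))=\sum_{\abs{P}\le k}\widehat{\overleftarrow\alpha}_P\,\tr(P\cE(\sigma))$ and using $\abs{\tr(P\cE(\sigma))}\le 1$, I would reduce to $\abs{f(\cE(\sigma))-\tr(O\sigma)}\le\norm{\widehat M^{-1}\bm\alpha-M^{-1}\bm\alpha}_1$, then invoke the exact identity $\widehat M^{-1}-M^{-1}=-\widehat M^{-1}(\widehat M-M)M^{-1}$ and bound the right-hand side by $\norm{\widehat M^{-1}}_{1\to1}\,\norm{\widehat M-M}_{1\to1}\,\norm{M^{-1}\bm\alpha}_1$, with $\norm{\widehat M^{-1}}_{1\to1}\lesssim\norm{M^{-1}}_{1\to1}$ once $\eps'$ is small (a Neumann-series estimate). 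Unlike the diagonal Pauli case, where each coefficient came from a single scalar division, inverting the full $T(n,k)\times T(n,k)$ matrix couples all $\cO(n^{2k})$ estimation errors and the bound depends on the conditioning $\norm{M^{-1}}_{1\to1}$ of $\cE$ on $V_k$, which must be assumed bounded (as holds for channels not too far from the identity); controlling this amplification is the crux of the argument. Since an entrywise error $\eps'$ gives $\norm{\widehat M-M}_{1\to1}\le T(n,k)\,\eps'=\cO(n^k\eps')$, achieving final accuracy $\eps$ forces $\eps'=\cO(\eps/n^k)$, which produces the stated $N=\cO(n^{2k}\log(n^{2k}/\delta)/\eps^2)$. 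The computation time $\cO(n^{4k}\log(n^{2k}))$ is then dominated by forming the $\cO(n^{2k})$ entry estimators, each a sum over the $N=\cO(n^{2k}\log(n^{2k}))$ samples, the $\cO(n^{3k})$ matrix inversion being of lower order.
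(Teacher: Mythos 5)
Your proposal matches the paper's own proof in all essentials: the same block upper-triangular transfer matrix $M$ of $\cE^\dagger$ restricted to the weight-$\le k$ Pauli subspace, the same $(1/3)^{\abs{P}}$-type shadow estimator for its $\cO(n^{2k})$ entries with a Hoeffding-plus-union-bound concentration, and the same first-order perturbation of $M^{-1}$ with the $T(n,k)=\cO(n^k)$ amplification forcing $\eps'=\cO(\eps/n^k)$ and hence the stated sample and time complexities. The paper likewise absorbs the conditioning into a constant $B(M)=\norm{M}_1\norm{M^{-1}}_1^2$ and implicitly assumes invertibility, so your flagged caveat is consistent with (not beyond) what the paper does.
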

The proposition shows that the sample complexity of channels increases to $\mathcal{O}(n^{2k}\log(n^{2k}/\delta)/\epsilon^2)$ due to more information of the channel needs to be estimated, and the total computational time increases to $\mathcal{O}(n^{4k}\log(n^{2k}))$ accordingly, but still remains efficient and scalable.

\section{Application in Clifford circuit error mitigation} \label{sec:application}
An application of our method is mitigating Pauli errors in Clifford circuits, in which we only consider a circuit $\mathcal{C}$ consisting of $H$, $S$ and $\cnot$ gates.
Then, each gate is followed by a Pauli noise channel and we assume that noise for the same type of gate is the same. This is a stronger setting than the usual gate-independent time-stationary Markovian (GTM) noise considered in Refs.~\cite{chen2021robust, Flammia_2020, chen2022quantum} which assumes the noise channel after each gate is identical. We denote the resultant circuit as $\tilde{\mathcal{C}}$. For any input state $\sigma$, the expected output is $\mathcal{C}(\sigma)$ but the actual output is $\tilde{\mathcal{C}}(\sigma)$. The goal of error mitigation is then obtaining the exact expectation value $\tr (O \mathcal{C}(\sigma))$ for given observable $O$ when only the noisy version is available. 
To achieve this, we first learn the eigenvalues of the noise channel associated to the three types of gate. The method is the same as steps $1$ to $3$ of \cref{alg:main}, but the noisy channel is replaced by different noisy gates, and the corresponding estimators require slight modification. For single qubit gate, they become
\begin{align}
    \widehat{x}_P &= \frac{1}{N} \sum_{i=1}^N \tr( P ( 3\ketbra{t_{i}}{t_{i}} - I)) \tr (P (U\rho_i U^\dagger)), \\
    \widehat{\lambda}_P &= 3 \widehat{x}_P,
\end{align}
where $U\in \{ H,S \}$. For $\cnot$ gate, the matrix of noisy $\cnot$ gate is a monomial matrix. If we label the matrix entries by Pauli operators, then $(\pi(P), P)$ entry takes the value of $\lambda_P$ where $\pi(P)$ is the resultant Pauli operator by conjugating $P$ by $\cnot$. In this case, estimation is done by
\begin{align}
    \widehat{x}_P &= \frac{1}{N}  \sum_{i=1}^N \tr (P \bigotimes_{j=1}^n (3 \ketbra{t_{ij}}{t_{ij}} - I)) \tr (\pi(P) \rho_i), \\
    \widehat{\lambda}_P &= 3^{|\pi(P)|} \widehat{x}_P.
\end{align}
Next, we show how to error mitigate a Clifford circuit consisting of two gates and a higher number of gates follows the same idea. Let 
\begin{equation}
    \mathcal{C} = \mathcal{U}_2 \circ \mathcal{U}_1,
\end{equation} 
where $\mathcal{U}_i (\rho) = U_i \rho U_i^\dagger$, $U_i \in \{ H, S, \cnot \}$. The corresponding noisy circuit $\tilde{\mathcal{C}}$ is given by
\begin{equation}
    \tilde{\mathcal{C}} = \mathcal{P}_{\mathcal{U}_2} \circ \mathcal{U}_2 \circ \mathcal{P}_{\mathcal{U}_1} \circ \mathcal{U}_1
\end{equation}
Let $\lambda_{i,P}$ denotes the eigenvalue of $P$ under $\mathcal{P}_{\mathcal{U}_i}$, i.e. the eigenvalue of $P$ under the noise channel coupled with the $i$\textsuperscript{th} gate and $\pi_i(P) = U^\dagger_i P U_i$ be the Pauli operator we get when conjugate $P$ by $U_i^\dagger$. It can be shown that 
\begin{equation}
    \tr (P \tilde{\mathcal{C}}(\rho)) = \lambda_{2,P} \lambda_{1,\pi_2(P)} \tr( P \mathcal{C}(\rho)),
\end{equation}
hence we can construct 
\begin{equation}
    \overleftarrow{\alpha}_P = \frac{\alpha_P}{\widehat{\lambda}_{2,P} \widehat{\lambda}_{1,\pi_2(P)}},
\end{equation}
which would have $\tr (\overleftarrow{O} \tilde{\mathcal{C}}(\rho)) \approx \tr (O \mathcal{C}(\rho))$. 

\section{Comparison with prior work} \label{sec:comparison}
\textit{Comparison with prior methods of information recovery and error mitigation.}
Existing information recovery method~\cite{zhao2023information} is not limited to Pauli channels but it necessitates the full information of the channel is known. Prior to applying this method of information recovery, if the channel is unknown, it is necessary to obtain the full description of the channel via tomography, which requires an extensive amount of quantum resources. Another limitation of the method is that the effect of inaccurate channel description on the error of information recovery has not been adequately investigated, hence the sample complexity under approximate channel description is theoretically incomplete. Compared to the method in Ref.~\cite{zhao2023information}, our proposed algorithm of information recovery requires zero knowledge of a Pauli channel, and the computational complexity scales polynomially with the number of qubits.
Moreover, to implement the method outlined in Ref.~\cite{zhao2023information}, one needs to implement arbitrary CPTP maps, whereas our method only requires preparing Pauli eigenstates and performing Pauli measurements, both of which are considerably easier to realise on a quantum device.

There are also some methods used in quantum error mitigation that aim to obtain an estimate of noiseless information $\tr(O\sigma)$ using only copies of the noisy state. A commonly used method in error mitigation is probabilistic error cancellation (PEC)~\cite{temme2017error}, which starts by decomposing a target process as a linear combination of implementable noisy processes. Using this decomposition, the ideal circuit is realised by probabilistic sampling of noisy processes. By definition, it works for arbitrary quantum processes, but still faces similar problem of requiring full knowledge of the noisy processes. Another example is virtual distillation~\cite{koczor2021exponential, huggins2021virtual}, which assumes the dominant pure eigenvector of the mixed noisy state is the noiseless state. By using multiple copies of the noisy states, we can obtain the noiseless expectation value. However, copies need to be used at the same time which means the circuit width is high. Although there are some variants that trade circuit depth with width~\cite{czarnik2021qubitefficient} or combine with the framework of classical shadows~\cite{seif2023shadow} to reduce the circuit width, the total complexity is still exponential in the number of qubits.

\textit{Differences with learning to predict quantum processes.}
Our proposed algorithm is inspired by the method proposed in Ref.~\cite{huang2022learning} that aims to predict the value of $\tr(O \mathcal{P}(\rho))$ from access to $\mathcal{P}$ and $\rho$, whereas we want to recover the original information from noisy $\mathcal{P}(\rho)$. The authors do so by estimating the resultant observable $\mathcal{P}(O)$ from collected classical shadows. We use similar estimation techniques to learn the eigenvalues of the Pauli channel. In the end, they can guarantee accuracy to $\epsilon$ for the mean squared error over a restricted set of states but allow for arbitrary observable to be predicted. Whereas we can predict up to accuracy for the absolute error for any noisy state but only local observables. 

\textit{Comparison with Pauli channel learning.}
Not requiring a complete description of the noise channel is one of the main advantages of our algorithm over existing methods. We remark that numerous algorithms are capable of estimating the probability distribution $p(P)$ or the error rates for Pauli channels, which is commonly referred to as Pauli channel learning~\cite{chen2022quantum, Flammia_2021, Flammia_2020, fawzi2023lower, harper2020efficient}.
We make careful comparison with Ref.~\cite{Flammia_2021} whose setting is the most similar to the learning part of our algorithm.
In Ref.~\cite{Flammia_2021}, their main method involves preparing random product Pauli eigenstates whose eigenvalue is $+1$, send them through the channel then measure each qubit in the same basis as the input state. Using the data collected, they turn the problem of estimating error rates into a population recovery-type problem and use tools from that area to estimate single Pauli error rate. They do this for $\frac{4}{\epsilon}$ of the Pauli operators whose estimates are greater than $\frac{\epsilon}{2}$ and set the rest to $0$ to guarantee the efficiency of the algorithm while making sure that the $l_\infty$-norm of the difference is less than $\epsilon$. The sample complexity is $\mathcal{O}(\log(n/\epsilon\delta)/\epsilon^2)$.
Although the data collected is similar to our method, the post-processing and the quantity to estimate are quite different. Firstly, we also measure the output state in random Pauli basis. This allows us to extend our framework to a broader group of channel like product channels, which we have discussed in \cref{sec:channel type}. Secondly, the Pauli error rates are related to the Pauli eigenvalues by a Hadamard transform, i.e.
\begin{align}
    \lambda_P &= \sum_{Q \in \{I.X.Y.Z\}^{\otimes n}} (-1)^{\langle P,Q \rangle} p(Q) \\ \text{where } \langle P,Q \rangle &= \begin{cases}
        0 & \text{if $P$ and $Q$ commute} \\
        1 & \text{if $P$ and $Q$ anti-commute}
    \end{cases}
\end{align}
This means that obtaining the eigenvalue of a single Pauli operator requires knowledge of all Pauli error rates and vice versa. Also estimating error rates element wise to accuracy $\epsilon$ cannot guarantee to estimate eigenvalues element wise to accuracy $\epsilon$ and vice versa. 
The authors also present a method for estimating single eigenvalue for a given Pauli operator with sample complexity $\mathcal{O}(\log(1/\delta)/\epsilon^2)$, then treat it as a query access and use it to estimate the error rates, which is similar in spirit but their estimate is restricted to the specific Pauli that they query.

\section{Concluding remarks} \label{sec:conclusion}
In this work, we have introduced an efficient quantum algorithm that could retrieve information from an unknown Pauli noise by learning the channel and the noisy state, using quantum resources that scale polynomially in the number of qubits. The efficiency of the proposed algorithm comes from the fact that only partial knowledge of the channel is required to recover the ideal information for a local and bounded-degree observable. For learning partial eigenvalues of the Pauli channel, we have proved a lower bound on the sample complexity that matches the upper bound of our channel learning algorithm, which implies the optimality of the algorithm. We have also shown that the method can be directly applied to recover information from noisy Clifford circuits in a more efficient way than that of previous error mitigation methods such as probabilistic error cancellation. 

The method described in our work should be broadly applicable to mitigate Pauli noise for large-scale quantum devices. For further theoretical exploration, it would be interesting to extend the algorithm for information recovery to a wider range of quantum channels. For the practical aspect of this method, it is worthwhile to investigate the integration of general error mitigation schemes, which could lead to a potential resource-efficient method for early fault-tolerant quantum computers~\cite{Suzuki2022, Piveteau2021}.
We also anticipate that this method will be useful for reducing the effect of noise and improving the accuracy in near-term experiments. Specifically, we expect it to be applied for enhancing the performance of variational quantum algorithms on noisy devices with limited number of resources and contributing for exploring physically relevant properties in material science~\cite{ma2020quantum} and chemistry~\cite{nam2020ground,Cao2019}.

\begin{acknowledgements}
Part of this work was done when Y. C., Z. Y., and C. Z. were research interns at Baidu Research. X. W. would like to thank Xuanqiang Zhao for helpful discussions.
\end{acknowledgements}

\bibliography{ref}

\clearpage

\appendix
\vspace{2cm}
\onecolumngrid


\renewcommand{\theequation}{S\arabic{equation}}
\renewcommand{\theproposition}{S\arabic{proposition}}
\renewcommand{\thelemma}{S\arabic{lemma}}
\renewcommand{\thedefinition}{S\arabic{definition}}
\renewcommand{\thecorollary}{S\arabic{corollary}}
\renewcommand{\thefigure}{S\arabic{figure}}
\renewcommand{\theproblem}{S\arabic{problem}}
\setcounter{equation}{0}
\setcounter{table}{0}
\setcounter{section}{0}
\setcounter{proposition}{0}
\setcounter{lemma}{0}
\setcounter{corollary}{0}
\setcounter{figure}{0}
\setcounter{definition}{0}

\numberwithin{equation}{section}

\section{Proof of \cref{lem:pauli evalue est}} \label{appendix:pauli est}

\renewcommand\thelemma{\ref{lem:pauli evalue est}}
\setcounter{lemma}{\arabic{lemma}-1}
\begin{lemma}
    Let $\mathcal{P}$ be an $n$-qubit Pauli channel with eigenvalues $\{\lambda_P\}_P$ so that $\mathcal{P} (P) = \lambda_P P$ for Pauli operators $P\in \{I,X,Y,Z\}^{\otimes n}$, $\mathcal{D}^0$ be the uniform distribution of product state of Pauli eigenstates. We have 
    \begin{equation}
    \underset{\rho\sim \mathcal{D}^0}{\mathbb{E}} \tr (P \mathcal{P}(\rho)) \tr (P\rho) = \bigg( \frac{1}{3} \bigg)^{|P|} \lambda_P.        
    \end{equation}
\end{lemma}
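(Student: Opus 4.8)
The plan is to reduce the claim to a single second-moment computation by exploiting two structural facts: the self-adjointness of $\cP$ and the product structure of both $P$ and the states drawn from $\mathcal{D}^0$. First I would use the fact, established earlier, that $\cP^\dagger = \cP$ together with the defining relation of the adjoint map and the eigenrelation $\cP(P) = \lambda_P P$. Concretely, $\tr(P\cP(\rho)) = \tr(\cP^\dagger(P)\rho) = \tr(\cP(P)\rho) = \lambda_P \tr(P\rho)$, so the left-hand side of the claim collapses to $\lambda_P \, \mathbb{E}_{\rho\sim\mathcal{D}^0}[\tr(P\rho)^2]$. All that remains is to show that this second moment equals $(1/3)^{|P|}$.

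Next I would exploit that $\mathcal{D}^0$ is a product distribution and $P = P_1\otimes\cdots\otimes P_n$ is a product operator. For a product eigenstate $\rho = \bigotimes_{j=1}^n \rho_j$ we have $\tr(P\rho) = \prod_{j=1}^n \tr(P_j\rho_j)$, hence $\tr(P\rho)^2 = \prod_{j=1}^n \tr(P_j\rho_j)^2$. Because the single-qubit factors $\rho_j$ are drawn independently, the expectation passes through the product: $\mathbb{E}_{\rho\sim\mathcal{D}^0}[\tr(P\rho)^2] = \prod_{j=1}^n \mathbb{E}_{\rho_j}[\tr(P_j\rho_j)^2]$. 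This reduces the computation to a per-qubit average.

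Finally I would evaluate each single-qubit factor. If $P_j = I$, then $\tr(I\rho_j) = 1$ always, so the factor contributes $1$. If $P_j \in \{X,Y,Z\}$, then among the six single-qubit Pauli eigenstates exactly two — the $\pm 1$ eigenstates of $P_j$ — give $\tr(P_j\rho_j) = \pm 1$, while the remaining four give $0$; averaging the square over the uniform choice yields $\frac{1}{6}(1+1) = \frac{1}{3}$. Taking the product over all qubits, only the $|P|$ non-identity factors each contribute a factor of $\frac{1}{3}$, so $\mathbb{E}_{\rho\sim\mathcal{D}^0}[\tr(P\rho)^2] = (1/3)^{|P|}$. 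Multiplying back the $\lambda_P$ obtained in the first step gives $(1/3)^{|P|}\lambda_P$, as claimed.

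I do not anticipate a serious obstacle: once the self-adjoint trick removes the channel from the first trace factor, the statement is essentially a bookkeeping exercise over the product structure. The only points demanding care are the justification that independence lets the expectation factorize through the tensor product and the explicit single-qubit second-moment count, so I would state the factorization and the enumeration of nonzero eigenstates precisely rather than leave them implicit.
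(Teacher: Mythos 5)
Your proposal is correct and follows essentially the same route as the paper's own proof: move $\mathcal{P}$ onto $P$ via self-adjointness to extract $\lambda_P$, then factorize $\mathbb{E}[\tr(P\rho)^2]$ over the independent single-qubit factors and observe that each non-identity site contributes $1/3$. Your write-up is slightly more explicit about the enumeration of the six eigenstates, but there is no substantive difference.
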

\renewcommand{\thelemma}{S\arabic{lemma}}

\begin{proof}
We have
\begin{align}
    &\underset{\rho \sim \mathcal{D}^0}{\mathbb{E}} \tr(P\mathcal{P}(\rho)) \tr(P\rho) \\
    = &\underset{\rho \sim \mathcal{D}^0}{\mathbb{E}} \tr (\mathcal{P}(P) \rho) \tr(P \rho) \\
    = &\lambda_P \underset{\rho \sim \mathcal{D}^0}{\mathbb{E}} \tr(P\rho)^2.
\end{align}
Writing $\tr(P\rho) = \prod_i \tr(P_i \rho_i)$, we can see that if $P_i = I$, then $\tr(P_i \rho_i) = 1$. If $P_i \neq I$, then $\tr (P_i\rho_i)^2 = 1$ with probability $1/3$ and $\tr (P_i\rho_i)^2 = 0$ with probability $2/3$, hence $\underset{\rho \sim \mathcal{D}^0}{\mathbb{E}} \tr(P\rho)^2 = (1/3)^{|P|}$ and complete the proof.
\end{proof}

\section{Proof of \cref{Thm:Main}}\label{appendix:proof main}
Before proving the proposition, we first introduce a few lemmas that help us to bound the error.
\begin{lemma}\label{Difference}
The difference between two observable expectation estimations can be upper bounded by the difference in the Pauli decomposition of observables, 
\begin{equation}
   \lvert \tr (O_1 \rho) - \tr(O_2 \rho)  \rvert  \leq \sum_P \lvert \Delta \alpha_P \rvert ,
\end{equation}
where $\Delta \alpha_P$ is the coefficient of $P$ in the Pauli expansion of $O_1 - O_2$.
\end{lemma}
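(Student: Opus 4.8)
The plan is to reduce everything to linearity of the trace together with the elementary bound $\lvert \tr(P\rho)\rvert \leq 1$ for a Pauli operator $P$ and a density matrix $\rho$. Since the quantity $\tr(O_1\rho)-\tr(O_2\rho)$ is linear in the observable, the first step is simply to combine the two traces into $\tr\big((O_1-O_2)\rho\big)$ and then insert the Pauli decomposition of the difference. Writing $O_1 = \sum_P \alpha_P^{(1)} P$ and $O_2 = \sum_P \alpha_P^{(2)} P$, so that $O_1 - O_2 = \sum_P \Delta\alpha_P\, P$ with $\Delta\alpha_P = \alpha_P^{(1)} - \alpha_P^{(2)}$, linearity of the trace gives
\begin{equation}
\tr(O_1\rho) - \tr(O_2\rho) = \sum_P \Delta\alpha_P \, \tr(P\rho).
\end{equation}

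Next I would apply the triangle inequality to pull the absolute value inside the sum, obtaining $\lvert \tr(O_1\rho)-\tr(O_2\rho)\rvert \leq \sum_P \lvert \Delta\alpha_P\rvert\, \lvert \tr(P\rho)\rvert$. It then remains to control each factor $\lvert \tr(P\rho)\rvert$. Every nonidentity Pauli operator $P$ is Hermitian with eigenvalues $\pm 1$, and the identity trivially satisfies $\tr(I\rho)=1$, so in all cases the operator norm is $\lVert P\rVert_\infty = 1$. Since $\rho$ is a density operator with $\rho \succeq 0$ and $\tr(\rho)=1$, the matrix H\"older inequality yields $\lvert \tr(P\rho)\rvert \leq \lVert P\rVert_\infty\, \lVert \rho\rVert_1 = 1$. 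Substituting this bound into each term of the sum gives $\lvert \tr(O_1\rho)-\tr(O_2\rho)\rvert \leq \sum_P \lvert \Delta\alpha_P\rvert$, which is exactly the claim.

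I do not expect a genuine obstacle here: the result is an immediate consequence of linearity and the uniform bound $\lvert \tr(P\rho)\rvert \leq 1$, and the only point requiring a line of justification is that last inequality, which follows from $\lVert P\rVert_\infty = 1$ and $\lVert \rho\rVert_1 = 1$. The lemma's role in the larger argument is to convert an estimation error on the recovered observable $\overleftarrow{O}$ into a sum over Pauli coefficients, which is precisely the form needed downstream in the proof of \cref{Thm:Main} when relating the eigenvalue-estimation accuracy $\epsilon'$ to the final target accuracy $\epsilon$; accordingly I would keep the statement in this coefficient-sum form rather than, say, further bounding it by $\paulinorm{O_1-O_2}{1}$.
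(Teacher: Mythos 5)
Your proposal is correct and follows essentially the same route as the paper's proof: linearity of the trace, the Pauli decomposition of $O_1-O_2$, the triangle inequality, and the bound $\lvert\tr(P\rho)\rvert\leq 1$. The only difference is that you spell out the justification of that last bound via H\"older's inequality, which the paper states without proof.
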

\begin{proof}
    \begin{align}
        &\lvert \tr (O_1 \rho) - \tr(O_2 \rho)  \rvert \\
        \leq &\lvert \sum_P \Delta \alpha_P \tr(P\rho) \rvert \\
        \leq & \sum_P \vert \Delta \alpha_P \rvert,
    \end{align}
where in the last line we use the triangle inequality and the fact that $\abs{\tr (P\rho)} \leq 1$.
\end{proof}\\
\begin{definition}
    The \emph{degree} of a bounded-degree observable is the maximum number of terms in the sum that act on each qubit.
\end{definition}
\begin{lemma}[Corollary 12 in Ref.~\cite{huang2022learning}]\label{lem:Pauli norm}
    Given an $n$-qubit $k$-local bounded-degree Hamiltonian $O$ with degree $d$. We have 
    \begin{equation} 
    \frac{1}{3} C(k,d) \paulinorm{O}{1} \leq \lVert O \rVert , 
    \end{equation}
    where $C(k,d) = \frac{\sqrt{2(k!)}}{\sqrt{d} k^{k+2.5} (2\sqrt{6} + 4\sqrt{3})^k}$ and $\lVert O \rVert$ is the spectral norm of $O$.
\end{lemma}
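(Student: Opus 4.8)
The plan is to prove the inequality in the equivalent form $\paulinorm{O}{1}\le 3\,C(k,d)^{-1}\norm{O}$, i.e.\ to lower-bound the spectral norm by a constant (depending only on $k$ and $d$) times the Pauli $1$-norm $\sum_P\abs{\alpha_P}$. I would work from the variational characterisation $\norm{O}=\max_{\rho}\abs{\tr(O\rho)}$ over density operators, so that the task reduces to exhibiting a single state $\rho$ whose expectation $\abs{\tr(O\rho)}$ captures a constant fraction of $\sum_P\abs{\alpha_P}$.

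The first thing I would record is why a second-moment bound does \emph{not} suffice, as this pins down where the bounded-degree hypothesis must enter. Writing $O=\sum_j\beta_j Q_j$ over the distinct nonzero Pauli terms and averaging $\tr(O\rho)^2$ over random $n$-fold product Pauli eigenstates $\rho\sim\cD^0$, the same orthogonality computation as in \cref{lem:pauli evalue est} kills every off-diagonal pair and leaves $\EE_{\rho}\tr(O\rho)^2=\sum_j 3^{-\abs{Q_j}}\beta_j^2\ge 3^{-k}\paulinorm{O}{2}^2$, hence $\norm{O}\ge 3^{-k/2}\paulinorm{O}{2}$. This controls the $2$-norm but not the $1$-norm: for $O=\sum_i Z_i$ one has $\paulinorm{O}{1}=n$ while $\paulinorm{O}{2}=\sqrt n$, so any bound routed through $\paulinorm{O}{2}$ necessarily loses a dimension-dependent factor of order $\sqrt{M}$ ($M$ the number of nonzero terms). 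Since the target constant $C(k,d)$ is $n$-independent, the proof cannot factor through $\paulinorm{O}{2}$ and must use an aligned (correlated) state rather than a random one.

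To build such a state I would exploit the interaction structure. Form the graph on the terms $\{Q_j\}$ joining two whenever their supports overlap; because $O$ is $k$-local of degree $d$, each support has at most $k$ qubits and each qubit lies in at most $d$ terms, so this graph has maximum degree at most $k(d-1)$ and is properly colourable with $\cO(kd)$ colours. Within a single colour class the terms have pairwise disjoint supports, hence commute and admit a common product eigenstate $\ket{\psi_c}$ with $\langle\psi_c|Q_j|\psi_c\rangle=\operatorname{sign}(\beta_j)$, so that class alone contributes its full local $1$-norm $\sum_{j\in c}\abs{\beta_j}$. The remaining work is to assemble these per-class aligned states into one global construction whose expectation still sees a constant fraction of $\sum_j\abs{\beta_j}$, and this is where the explicit $C(k,d)$ is forged. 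Concretely I would pursue either (i) a moment estimate $\norm{O}\ge\bigl(2^{-n}\tr(O^{2m})\bigr)^{1/2m}$, lower-bounding the trace by counting closed Pauli ``paths'' of length $2m$, using $k$-locality to bound the per-step branching by the factor $(2\sqrt6+4\sqrt3)$ and the degree to bound multiplicities; or (ii) averaging the aligned product states over colour classes while bounding the interference from terms that accidentally fall inside a fixed support, again using $d$ to control their number. Either route tracks $k!$ and $k^{k+2.5}$ from the combinatorics of how $\le k$-qubit terms overlap and $\sqrt d$ from the per-qubit term budget.

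The main obstacle is precisely the control of these cross terms: a random state gives orthogonality but only the $2$-norm, whereas a sign-aligned state recovers the $1$-norm on its own class but introduces uncontrolled contributions from every other term whose support happens to lie inside the region fixed to a pure state. A clean lower bound hinges on showing that, under the bounded-degree promise, these stray contributions cannot overwhelm the aligned signal; quantifying this interference with the stated constant is the delicate combinatorial step, and is exactly what the careful bookkeeping behind Ref.~\cite{huang2022learning} supplies.
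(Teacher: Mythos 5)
This statement is not proved in the paper at all: it is imported verbatim, with its constant $C(k,d)$, as Corollary~12 of Ref.~\cite{huang2022learning}, so the only "proof" the paper offers is the citation. Judged as a blind attempt to supply the missing argument, your proposal contains correct and relevant preliminary observations but does not reach a proof. The second-moment computation $\mathbb{E}_{\rho\sim\mathcal{D}^0}\tr(O\rho)^2=\sum_j 3^{-\abs{Q_j}}\beta_j^2$ is right, and so is the conclusion you draw from the $O=\sum_i Z_i$ example that any route through $\paulinorm{O}{2}$ must lose a factor growing with the number of terms, hence that the bounded-degree hypothesis is where the $n$-independent constant must come from. The colouring setup is also sound: the conflict graph has maximum degree at most $k(d-1)$, each colour class $O_c$ consists of commuting, disjointly supported Pauli terms, and indeed $\norm{O_c}=\paulinorm{O_c}{1}$ via a sign-aligned product eigenstate.

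The genuine gap is the step you yourself flag: passing from per-class equalities $\norm{O_c}=\paulinorm{O_c}{1}$ to a bound of the form $\sum_c\paulinorm{O_c}{1}\leq \mathrm{poly}(k,d)\cdot\norm{O}$. Evaluating $O$ in the state $\ket{\psi_c}$ aligned with class $c$ picks up $\sum_{j\in c}\abs{\beta_j}$ plus uncontrolled contributions $\sum_{j\notin c}\beta_j\bra{\psi_c}Q_j\ket{\psi_c}$, which can be negative and of comparable magnitude, and averaging over classes does not make them cancel; nothing in the proposal shows these interference terms are dominated by the aligned signal. The two routes you sketch for closing this (a $\tr(O^{2m})$ moment/path-counting argument, or a direct interference bound) are stated as intentions, not carried out, and neither the factors $k!$, $k^{k+2.5}$, nor $(2\sqrt{6}+4\sqrt{3})^k$ in $C(k,d)$ are ever derived. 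Worse, the proposal closes by asserting that this step "is exactly what the careful bookkeeping behind Ref.~\cite{huang2022learning} supplies" --- but that bookkeeping \emph{is} the statement to be proved, so as written the argument is circular. To count as a proof, the interference-control (or moment-counting) step would have to be executed explicitly with a constant depending only on $k$ and $d$.
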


\renewcommand\thetheorem{\ref{Thm:Main}}
\setcounter{theorem}{\arabic{theorem}-1}
\begin{theorem}   
Given an unknown $n$-qubit Pauli channel $\mathcal{P}$, a noisy state $\cP(\sigma)$, and an $n$-qubit bounded-degree $k$-local observable $O$ where $k=\mathcal{O}(1)$. For $\epsilon, \delta > 0$, there exists an algorithm that uses $N = \mathcal{O}(\log (n^{k}/ \delta)/\epsilon^2)$ accesses to the channel to obtain a function $f(\mathcal{P}(\sigma), O)$ such that
\begin{equation}
\abs{f(\mathcal{P}(\sigma),O) - \tr(O\sigma)} \leq \epsilon 
\end{equation}
with probability at least $1-\delta$. The computation time is $\mathcal{O}(n^k \log (n^{k}))$.
\end{theorem}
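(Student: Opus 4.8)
The plan is to reduce the entire error to the inaccuracy of the learned eigenvalues and then convert a uniform eigenvalue-estimation accuracy $\epsilon'$ into the target accuracy $\epsilon$. First I would rewrite the controlled quantity. Since $\cP$ is self-adjoint, $f(\cP(\sigma),O) = \tr(\overleftarrow{O}\cP(\sigma)) = \tr(\cP(\overleftarrow{O})\sigma)$, and because every Pauli operator is an eigenoperator, $\cP(\overleftarrow{O}) = \sum_{P:\abs{P}\le k}(\lambda_P/\widehat{\lambda}_P)\,\alpha_P P$. Treating this as one observable and $O=\sum_P \alpha_P P$ as the other, \cref{Difference} gives
\[
\abs{f(\cP(\sigma),O)-\tr(O\sigma)} \le \sum_{P:\abs{P}\le k}\abs{\frac{\lambda_P-\widehat{\lambda}_P}{\widehat{\lambda}_P}}\,\abs{\alpha_P},
\]
which isolates exactly the relative error of the learned eigenvalues and is independent of $\sigma$.

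Second, I would assume each eigenvalue with $\abs{P}\le k$ is learned to additive accuracy $\epsilon'$ and that the channel eigenvalues are bounded below by a constant $\lambda_{\min}$. Choosing $\epsilon'\le \lambda_{\min}/2$ forces $\abs{\widehat{\lambda}_P}\ge\lambda_{\min}/2$, so each summand is at most $(2\epsilon'/\lambda_{\min})\abs{\alpha_P}$ and the total is at most $(2\epsilon'/\lambda_{\min})\paulinorm{O}{1}$. I would then invoke \cref{lem:Pauli norm} to bound $\paulinorm{O}{1}\le 3\norm{O}/C(k,d)$, which is $\mathcal{O}(1)$ since $\norm{O}=1$ and $k,d=\mathcal{O}(1)$; hence the error is at most $\epsilon$ once $\epsilon'=\Theta(\epsilon)$.

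Third, I would establish the concentration yielding the sample complexity. By \cref{lem:pauli evalue est} the per-sample quantity in \cref{eqn:xp} is unbiased for $x_P=(1/3)^{\abs{P}}\lambda_P$; each sample is a product over the support of $P$ of factors bounded by $3$, hence lies in $[-3^{\abs{P}},3^{\abs{P}}]$, so the rescaled summands defining $\widehat{\lambda}_P=3^{\abs{P}}\widehat{x}_P$ are bounded by $9^{\abs{P}}\le 9^k$. Hoeffding's inequality then gives $\Pr[\abs{\widehat{\lambda}_P-\lambda_P}\ge\epsilon']\le 2\exp(-N\epsilon'^2/(2\cdot 81^k))$, and a union bound over the $T(n,k)=\mathcal{O}(n^k)$ operators of weight at most $k$ shows that $N=\mathcal{O}(\log(n^k/\delta)/\epsilon'^2)=\mathcal{O}(\log(n^k/\delta)/\epsilon^2)$ channel accesses suffice for all estimates to be simultaneously $\epsilon'$-accurate with probability $1-\delta$. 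The runtime count is routine: forming each of the $\mathcal{O}(n^k)$ estimators costs $\mathcal{O}(Nk)$, giving $\mathcal{O}(n^k\log(n^k))$ for constant $k$ and fixed $\epsilon$.

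The main obstacle is the division by $\widehat{\lambda}_P$ in the first step: reducing to relative error means the argument collapses if any eigenvalue is near zero, since the reciprocal then amplifies both the estimation error and the shadow noise. This is not merely technical—an eigenvalue equal to zero destroys the corresponding Pauli component irrecoverably—so the proof genuinely requires the assumption $\lambda_{\min}=\Omega(1)$ (equivalently, $\cP$ bounded away from singular) to be absorbed into the constants, together with the threshold $\epsilon'\le\lambda_{\min}/2$ to keep $\widehat{\lambda}_P$ bounded away from zero with high probability.
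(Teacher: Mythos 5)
Your proposal is correct and follows essentially the same route as the paper's own proof: \cref{lem:pauli evalue est} plus Hoeffding and a union bound over the $\mathcal{O}(n^k)$ low-weight Pauli operators for the learning step, then \cref{Difference} and \cref{lem:Pauli norm} to propagate the eigenvalue error into the expectation value, with the dependence on the smallest eigenvalue absorbed into the constants. Your treatment of the division by $\widehat{\lambda}_P$ (enforcing $\abs{\widehat{\lambda}_P}\ge \lambda_{\min}/2$ exactly) is if anything slightly cleaner than the paper's first-order expansion with ``$\tilde{\epsilon}^2$ negligible'', and you correctly flag that $\lambda_{\min}=\Omega(1)$ is the hidden assumption both arguments rely on.
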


\begin{proof}
The structure of the proof is as follows: we first use Hoeffding's lemma to establish a relation between $N$, the number of channel uses, and $\tilde{\epsilon}$, the accuracy which we need to estimate $\lambda_P$ to. The proof is concluded once we find the relation between $\tilde{\epsilon}$ and $\epsilon$.\\
We let $x_P = (1/3)^{\abs{P}} \lambda_P$, $D(n,k)$ be the number of n-qubit Pauli operator with weight $k$ and $T(n,k) = \sum_{l=0}^k D(n,l)$. Note that $D(n,k) = \binom{n}{k}3^k = \mathcal{O}(n^k)$.\\
We let $\tilde{\epsilon}$ to be determined later, $\tilde{\epsilon}' = \tilde{\epsilon}/3^k$. By \cref{lem:pauli evalue est} and Hoeffding's inequality,
\begin{equation} 
\mathbb{P} ( |\widehat{x}_P - x_P| \geq \tilde{\epsilon}') \leq 2e^{-N\tilde{\epsilon}'^2 / 2\times 3^{2k}} \quad \forall P,|P|\leq k.  
\end{equation}
By a union bound,
\begin{align}
    &\mathbb{P} (|\widehat{x}_P - x_P| \leq \tilde{\epsilon}'\quad \forall P,|P|\leq k)\\
    \geq &1 - \bigcup_{P} \mathbb{P}( |\widehat{x}_P - x_P| \geq \tilde{\epsilon}')\\
    \geq &1 - 2 \sum_{l=0}^k D(n,l) e^{-N\tilde{\epsilon}'^2/{2\times3^{2k}}}.
\end{align} 
Equating the last term to $1-\delta$ we get that we require 
\begin{equation} 
N = \frac{2\times3^{2k}\log (2 T(n,k) /\delta)}{\tilde{\epsilon}'^2}.
\end{equation}
Using the fact that $T(n,k) = \mathcal{O}(n^k)$, we have $N = \mathcal{O}(\log(n^{k}/\delta)/\tilde{\epsilon}'^2 )$.
By definition of $\tilde{\epsilon}'$, we also have with probability at least $1-\delta$, 
\begin{equation} \label{alpha_P error}
|\widehat{\lambda}_P - \lambda_P | \leq \tilde{\epsilon} \quad \forall P,|P|\leq k.
\end{equation}

From now on, conditioned on the above event. Using \cref{Difference},
\begin{align}
    &\lvert f(\mathcal{P}(\sigma), O) - \tr(O\sigma) \rvert\\
    = & \lvert \tr(\mathcal{P} (\overleftarrow{O}) \sigma) - \tr(O\sigma)|\\
    \leq &\sum_{P:|P|\leq k} |\lambda_P \overleftarrow{\alpha}_P - \alpha_P| \\
    = &\sum_{P:|P|\leq k} \abs{\frac{\lambda_P}{\widehat{\lambda}_P} \alpha_P - \alpha_P}.
\end{align}

We can write $\widehat{\lambda}_P = \lambda_P + \Delta_P$ where by \cref{alpha_P error} we have that $\abs{\Delta_P} < \tilde{\epsilon}$. We also assume that $\tilde{\epsilon} < \lambda_P$, which will be validated when we determine the value of $\tilde{\epsilon}$ later. Hence, 

\begin{align}
    &\sum_{P:|P|\leq k}\abs{\lambda_P \widehat{\lambda}_P^{-1}\alpha_P-\alpha_P}\\
    = &\sum_{P:|P|\leq k}\abs{\lambda_P \lambda_P^{-1} \left(1+\frac{\Delta_P}{\lambda_P}\right)^{-1} \alpha_P-\alpha_P}\\
    = &\sum_{P:|P|\leq k}\abs{\left(1-\frac{\Delta_P}{\lambda_P}+\mathcal{O}(\Delta_P^2)\right) \alpha_P - \alpha_P}.
\end{align}

Assume $\tilde{\epsilon}^2$ is negligible, and define $\lambda^{(k)}_{\min} = \min_{P:|P|\leq k} | \lambda_P |$. So $|\Delta_P/\lambda_P| \leq |\tilde{\epsilon}/\lambda^{(k)}_{\min} |$,
\begin{align}
    & \sum_{P:|P|\leq k} \abs{ \left(1 - \frac{\Delta_P}{\lambda_P} + \mathcal{O}(\Delta_P^2) \right)\alpha_P - \alpha_P } \\
    \leq & \bigg( \frac{\tilde{\epsilon}}{\lambda^{(k)}_{\min}} \bigg) \bigg( \sum_{P:|P|\leq k} |\alpha_P| \bigg)\\
    \leq & \bigg( \frac{\tilde{\epsilon}}{\lambda^{(k)}_{\min}} \bigg) \bigg( \paulinorm{O}{1} \bigg) \\
    \leq & \bigg( \frac{\tilde{\epsilon}}{\lambda^{(k)}_{\min}} \bigg)  \bigg( \frac{3\lVert O \rVert}{C(k,d)} \bigg) \\
    = & \frac{3}{\lambda^{(k)}_{\min} C(k,d)} \tilde{\epsilon},
\end{align}
where we used the bound on Pauli-1 norm from \cref{lem:Pauli norm} and $d$ is the degree of $O$.
By setting 
\begin{equation} 
\tilde{\epsilon} = \frac{\lambda^{(k)}_{\min} C(k,d)}{3}\epsilon,  
\end{equation}
we have 
\begin{equation}
\abs{f(\mathcal{P}(\sigma)) - \tr(O\sigma)} \leq \epsilon. 
\end{equation}
In conclusion, $N = \mathcal{O} (\log (n^{k}/\delta)/\tilde{\epsilon}^2) = \mathcal{O}(\log (n^k/\delta) / \epsilon^2) $. 
For total computational complexity, step 3 of the algorithm requires $\mathcal{O}(n^k\cdot N\cdot k)$ computations, step 4 and 5 of the algorithm both require $\mathcal{O}(n^k)$ computations conditioned on our assumptions that values of $\tr(P\mathcal{P}(\sigma))$ is known. So the overall computational complexity is $\mathcal{O}(Nkn^k) = \mathcal{O}(n^k \log(n^k))$.
\end{proof}
\renewcommand{\thetheorem}{\arabic{theorem}}

\section{Optimality of sample complexity of channel}\label{appendix: optimality}
First, we elaborate the task mentioned in \cref{prop:lower bound}: 
\begin{problem}\label{problem:pauli evalue est copy}
Given a Pauli channel $\mathcal{P}$. Let $\lambda^{(k)}$ denote the vector of eigenvalue of $\mathcal{P}$ for Pauli operators whose weight are at most $k$. We suppress the superscript when the value of $k$ does not change. We would like an algorithm that learns $\widehat{\lambda}$ such that $\lvert \widehat{\lambda}_P - \lambda_P \lvert \leq \epsilon$ with high probability by $N$ accesses of the channel. More precisely, for each access, the user can prepare an arbitrary input state, send it through the channel and measure the output state using arbitrary POVM. In terms of notations, for the $t$\textsuperscript{th} access, the user prepare the state $\rho_t$ and obtain output state $\mathcal{P}(\rho_t)$. Finally, the user measures using POVM $\{ \lambda^t_i \ketbra{\phi_i}{\phi_i} \}$ where $\sum_i \lambda^t_i \ketbra{\phi_i}{\phi_i} = I$. 
\end{problem}

\Cref{alg:main} indeed fits the above description and provides an upper bound for the problem. We now prove that
\begin{proposition}
To solve \cref{problem:pauli evalue est copy}, the number of channel use $N$ is at least $\Omega(\log(n^k)/\epsilon^2)$.
\end{proposition}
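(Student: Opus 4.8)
The plan is to prove the lower bound by reducing the eigenvalue-estimation task of \cref{problem:pauli evalue est copy} to a multiple-hypothesis-testing problem and then applying an information-theoretic (Fano-type) argument in the spirit of Refs.~\cite{huang2021informationtheoretic,chen2022quantum,fawzi2023lower}. First I would construct a packing of hard instances. Let $\mathcal{P}_0$ be the completely depolarizing channel, for which $\lambda_I = 1$ and $\lambda_P = 0$ for all $P\neq I$. For each nonidentity Pauli $A$ with $\abs{A}\leq k$, define $\mathcal{P}_A$ to be the Pauli channel obtained from $\mathcal{P}_0$ by setting the single eigenvalue $\lambda_A = 2\epsilon$ and leaving all others unchanged. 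Since $p(Q) = 4^{-n}\sum_P (-1)^{\langle P,Q\rangle}\lambda_P$, bumping one eigenvalue by $2\epsilon\leq 1$ perturbs each $p(Q)$ by at most $2\epsilon\,4^{-n}$, so $\mathcal{P}_A$ remains a legitimate Pauli channel. There are $M = \Theta(n^k)$ such instances, hence $\log M = \Theta(\log n^k)$. Because in $\mathcal{P}_A$ only the coordinate $\lambda_A$ exceeds $\epsilon$, any estimator achieving accuracy $\epsilon$ on every weight-$\leq k$ eigenvalue identifies the index $A$ by reading off the unique coordinate above $\epsilon$; thus $\epsilon$-accurate estimation solves the identification problem.

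Next I would model the $N$-access adaptive algorithm by the transcript $Z=(z_1,\dots,z_N)$ of classical outcomes and note that, by Fano's inequality, identifying a uniformly random $A$ with constant success probability forces $I(A;Z) = \Omega(\log M) = \Omega(\log n^k)$. Using the eigenoperator property $\mathcal{P}_A(A) = 2\epsilon\,A$ underlying \cref{lem:pauli evalue est}, the perturbation acts explicitly as $(\mathcal{P}_A - \mathcal{P}_0)(\rho) = 2\epsilon\,2^{-n}\tr(A\rho)\,A$, so for any POVM element $M_i$ the outcome probabilities satisfy $\Pr_A(i)-\Pr_0(i) = 2\epsilon\,2^{-n}\tr(A\rho)\tr(A M_i)$. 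I would then apply the chain rule $I(A;Z)\leq \sum_t I(A;z_t\mid z_{<t})$ and, conditioned on the history (which fixes the input state $\rho_t$ and the POVM), control each term by a $\chi^2$-divergence between $\Pr_A(\cdot)$ and the mixture distribution.

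The key is that after summing over outcomes $i$ and averaging over the uniformly random hypothesis $A$, the Pauli orthogonality relations $\sum_A \tr(A\rho)^2 \leq 2^n$ and $\sum_A \tr(AM_i)^2 \leq 2^n\tr(M_i^2)$, together with $\sum_i M_i = I$, collapse the double sum to a quantity of order $\epsilon^2$ per access. Combining this with Fano yields $N\cdot O(\epsilon^2)\geq \Omega(\log n^k)$, that is $N = \Omega(\log(n^k)/\epsilon^2)$.

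The main obstacle I expect is establishing the $O(\epsilon^2)$ per-access information bound uniformly over all adaptively chosen $n$-qubit input states and POVMs. The delicate point is that the $\chi^2$ numerator $\epsilon^2\,2^{-2n}\tr(A\rho)^2\tr(AM_i)^2$ must be controlled against a possibly small denominator $\Pr_0(i)=2^{-n}\tr(M_i)$ while the Parseval sums run over all $\Theta(n^k)$ hypotheses; making the Cauchy--Schwarz and orthogonality bookkeeping tight enough that neither the number of hypotheses nor the measurement granularity inflates the constant beyond $O(\epsilon^2)$ is the crux of the argument. A secondary technicality is checking that the $2\epsilon$-separation survives the rounding step, so that the estimation guarantee genuinely certifies the identification of $A$.
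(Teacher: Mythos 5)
Your overall architecture is exactly the paper's: a packing of hard instances consisting of the completely depolarizing channel perturbed by a single eigenvalue bump of size $2\epsilon$ on one low-weight Pauli (the paper uses both signs, $\mathcal{P}_{(P,s)}(\rho) = \tr(\rho)\tfrac{I}{2^n} + 2\epsilon s\tfrac{\tr(P\rho)}{2^n}P$, which only changes $M$ by a factor of $2$), reduction of $\epsilon$-accurate estimation to identification, Fano's inequality giving $I(X;\widehat{X}) = \Omega(\log n^k)$, and the chain rule plus a per-access mutual-information bound of $O(\epsilon^2)$. Your perturbation formula $\Pr_A(i)-\Pr_0(i) = 2\epsilon\,2^{-n}\tr(A\rho)\tr(AM_i)$ matches the paper's computation, and your validity check that the $p(Q)$ stay nonnegative is a point the paper glosses over.

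However, the tool you propose for the step you correctly identify as the crux does not close the bound. Applying Parseval over the hypothesis class to $\sum_A \tr(A\rho)^2\tr(AM_i)^2$ is problematic: the two Parseval sums cannot be applied simultaneously to a product, and any Cauchy--Schwarz split leaves you with a bound of order $\epsilon^2\,2^n/M$ per access, which is useless since $M=\Theta(n^k)\ll 2^n$. Parseval only pays off when the hypothesis family is close to the full Pauli group; here it is exponentially sparse. The paper's resolution is more elementary: bound each factor pointwise, $\abs{\tr(A\rho_t)}\le 1$ and $\abs{\langle\phi_{i_t}|A|\phi_{i_t}\rangle}\le 1$, so that $\mathbb{E}_{A}\Pr(i_t\mid A)^2 \le \tfrac{(\lambda^t_{i_t})^2}{2^{2n}}(1+4\epsilon^2)$, and then the $\chi^2$-type bound $I(X;I_t)\le \sum_{i_t}\bigl(\mathbb{E}_A\Pr(i_t\mid A)^2-\Pr(i_t)^2\bigr)/\Pr(i_t)$ collapses to $4\epsilon^2$ using only $\sum_{i_t}\lambda^t_{i_t}=2^n$. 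Replace your orthogonality bookkeeping with these trivial pointwise bounds and the argument goes through uniformly over all adaptively chosen states and POVMs; no further control of small denominators is needed because the $\tr(M_i)$ factors cancel exactly.
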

\begin{proof}
    We follow the idea outlined in~\cite{chen2022quantum,huang2021informationtheoretic} and~\cite{fawzi2023lower}. We first construct a set of Pauli channels labelled by $(P,s)$:
    \begin{equation}\label{eqn:lower Pauli}
        \mathcal{P}_{(P,s)}(\rho) \coloneqq \tr(\rho) \frac{I}{2^n} + 2\epsilon s \frac{\tr(P\rho)}{2^n} P ,
    \end{equation}
for $P:P\neq I, |P|\leq k$, $s\in\{\pm 1\}$. We further define $T(n,k) = \sum_{l=1}^k \binom{n}{l} 3^l$, each term in the summation is the number of Pauli operators with weight $l$, so the set defined by \cref{eqn:lower Pauli} contains $2T(n,k)$ elements. The eigenvalues of each channel is either $0$ or $\pm 2\epsilon$, so if we can learn each element of $\lambda^{(k)}$ up to accuracy $\epsilon$ with probability at least $1-\delta$, then we can determine which Pauli channel has been given. Hence we can use this set for a theoretical quantum communication task. Suppose Alice and Bob agreed an ordering of $\{ \mathcal{P}_{(P,s)}\}_{(P,s)}$. Then Alice picks a number in $[ 2T(n,k)]$ uniformly at random and sends $N$ copies of the corresponding channel to Bob. Once Bob received them, he can use the protocol that can solve \cref{problem:pauli evalue est copy} to successfully decode Alice's message with probability at least $1-\delta$. If we call the uniform distribution on $(P,s)$ $X$, and distribution of Bob's guesses $\widehat{X}$, then by Fano's inequality,
\begin{equation}
    H ( X \mid \widehat{X}) \leq \delta\log (2T(n,k)-1) + h(\delta),
\end{equation}
where $h(p)$ is the binary entropy of $\{p,1-p\}$. This then implies that
\begin{equation}\label{eqn:mutual lower bound}
    I(X : \widehat{X}) = H(X) - H(X \mid \widehat{X}) \geq \log(2T(n,k)) - h(\delta) - \delta\log(2T(n,k)-1) = \Omega(\log(n^k)).
\end{equation}
Recall that Bob's protocol involves choosing an input state $\rho$, sending it through the channel and measuring using POVM. Let $I_t$ denotes the random variable for Bob's $t$\textsuperscript{th} measurement outcome. Since $\widehat{X}$ is a function of $I_1, \ldots, I_N$, by data processing inequality and the chain rule,
\begin{align}
    I(X:\widehat{X}) &\leq I(X:I_1, \ldots, I_N)\\
    &= \sum_{t=1}^{N} I(X:I_t \mid I_1, \ldots, I_{t-1})\\
    &\leq \sum_{t=1}^{N} I(X:I_t) \label{eqn:sum mutual}.
\end{align}
Before we proceed, we make some preliminary calculations. Let $\Pr(i_t \mid (P,s))$ be the probability that the $t$\textsuperscript{th} measurement result is $i_t$ given that the channel received is $(P,s)$, we have that 
\begin{align}
    \Pr(i_t \mid (P,s)) &= \lambda_{i_t} \bra{\phi^t_{i_t}} \frac{I}{2^n} + 2\epsilon s \frac{\tr(P\rho_t)}{2^n} P \ket{\phi_{i_t}}\\
    &= \frac{\lambda_{i_t}}{2^n} [1 + 2\epsilon s \tr(P\rho_t) \bra{\phi^t_{i_t}}P \ket{\phi^t_{i_t}}].
\end{align}
From this, we can calculate $\Pr(i_t)$ by taking the marginal:
\begin{equation}
    \Pr(i_t) = \underset{P,s}{\mathbb{E}} \Pr(i_t \mid (P,s)) = \frac{\lambda_{i_t}}{2^n}.
\end{equation}
Furthermore, we calculate the squared quantities, 
\begin{equation}\label{eqn:expec sq}
    \underset{P,s}{\mathbb{E}} \Pr(i_t \mid (P,s))^2 = \frac{(\lambda^t_{i_t})^2}{2^{2n}} [1 + 4\epsilon^2 \underset{P}{\mathbb{E}} \big[\tr (P\rho_t)]^2 \bra{\phi^t_{i_t}}P\ket{\phi^t_{i_t}}^2 \big] \leq \frac{(\lambda^t_{i_t})^2}{2^{2n}} [1 + 4\epsilon^2].
\end{equation}
Now continue from \cref{eqn:sum mutual}, we want to bound $I(X:I_t)$. By definition of mutual information, we have 
\begin{align}
    I(X:I_t) &= H(I_t) - H(I_t \mid X)\\
    &= - \sum_{i_t} \Pr(i_t) \log \Pr(i_t) + \underset{P,s}{\mathbb{E}} \sum_{i_t} \Pr(i_t\mid(P,s)) \log \Pr(i_t\mid(P,s))\\
    &\leq -\sum_{i_t} \Pr(i_t) \log \Pr(i_t) + \underset{P,s}{\mathbb{E}} \sum_{i_t} \Pr(i_t \mid (P,s)) \bigg[ \log \Pr(i_t) + \frac{\Pr(i_t \mid (P,s))- \Pr(i_t)}{\Pr(i_t)}\bigg]\\
    &= \sum_{i_t} \frac{\underset{P,s}{\mathbb{E}}\Pr(i_t \mid (P,s))^2 - \Pr(i_t)^2}{\Pr(i_t)}\\
    &\leq \sum_{i_t} \frac{\lambda^t_{i_t}}{2^{n}} 4\epsilon^2\\
    &= 4 \epsilon^2,
\end{align}
where the third line follows from $\log x \leq \log y + \frac{x-y}{y}$ and taking $x= \Pr(i_t \mid (P,s))$, $y=\Pr(i_t)$. The second to last line comes from \cref{eqn:expec sq} and the last line follows from the fact that $\sum_{i_t} \lambda^t_{i_t} = 2^n$.\\
Combine this with \cref{eqn:mutual lower bound,,eqn:sum mutual}, we have that $N = \Omega(\log(n^k)/\epsilon^2)$.
\end{proof}

\section{Clifford shadows}\label{appendix: clifford shadow}

In \cref{alg:main}, we only consider collecting Pauli shadows of the channel, and a natural question to ask is what happens if we use the highly related Clifford shadow and whether it yields better sample complexity. For classical shadows of a quantum state, the complexity for using Clifford shadow is given by
\begin{proposition}[\cite{huang2020predicting}]
    Adopt a random Clifford basis primitive, where each random unitary is uniformly selected from the $n$-qubit Clifford group. Given a collection of observables $O_1,O_2,\ldots, O_M$, accuracy parameters $\epsilon,\delta \in [0,1]$,  then 
    \begin{align}
        N = \mathcal{O}\left( \frac{\log(M/\delta)}{\epsilon^2} \max_i \tr(O_i ^2) \right)
    \end{align}
    samples are required to simultaneously predict each $\tr(O_i \rho)$ up to accuracy $\epsilon$ with success probability $1-\delta$.
\end{proposition}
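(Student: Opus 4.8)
The plan is to reproduce the classical-shadow argument of Ref.~\cite{huang2020predicting}, specialised to the global Clifford ensemble. First I would set up the single-shot estimator. Drawing $U$ uniformly from the $n$-qubit Clifford group and measuring the outcome $\ket{b}$ in the computational basis, the associated measurement channel $\cM(X) = \EE_U \sum_b \bra{b}UXU^\dagger\ket{b}\, U^\dagger\ketbra{b}{b}U$ evaluates, using that the Clifford group is a unitary $3$-design (a $2$-design already suffices here), to the depolarising form $\cM(X) = (X + \tr(X) I)/(2^n+1)$. This map is invertible on the operator space, with $\cM^{-1}(X) = (2^n+1)X - \tr(X)I$, so the classical snapshot $\widehat{\rho} = (2^n+1)\, U^\dagger \ketbra{b}{b} U - I$ is an unbiased estimator of $\rho$, and hence $\widehat{o}_i = \tr(O_i \widehat{\rho})$ is an unbiased estimator of $\tr(O_i \rho)$ for every $i$.

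Next I would control the fluctuations of $\widehat{o}_i$ through the single-shot variance, which is bounded by the squared shadow norm $\norm{O_i}_{\mathrm{shadow}}^2$. The central computation, and the main obstacle, is to show that for the global Clifford ensemble $\norm{O}_{\mathrm{shadow}}^2 \le 3\,\tr(O_0^2) \le 3\,\tr(O^2)$, where $O_0 = O - 2^{-n}\tr(O)\, I$ is the traceless part. This requires evaluating the second and third moments of the Clifford ensemble: after substituting $\cM^{-1}(O)$ into the definition of the shadow norm, one is left with an expectation over $U$ of quantities quartic in $U$, which can be computed exactly because the Clifford group reproduces the Haar average up to the third moment. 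Carrying out this Weingarten-type average produces the constant $3$ and the dependence on $\tr(O_0^2)$; the reduction $\tr(O_0^2) \le \tr(O^2)$ is then immediate.

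Finally I would convert the variance bound into the stated sample complexity by the standard median-of-means estimator. Splitting the $N$ snapshots into $K = \mathcal{O}(\log(M/\delta))$ batches and taking, for each $i$, the median over batches of the batch means, Chebyshev's inequality guarantees that a batch of size $\mathcal{O}(\norm{O_i}_{\mathrm{shadow}}^2/\epsilon^2)$ estimates $\tr(O_i\rho)$ to accuracy $\epsilon$ with constant probability; the median-of-means boosting drives the per-observable failure probability below $\delta/M$, and a union bound over the $M$ observables handles all of them at once. Multiplying the batch count by the batch size and invoking the shadow-norm bound gives $N = \mathcal{O}\big(\log(M/\delta)\,\epsilon^{-2}\,\max_i \tr(O_i^2)\big)$, as claimed. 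The only step demanding genuine care is the moment computation of the second paragraph; the unbiasedness and the median-of-means packaging are routine once the $3$-design shadow-norm bound is in hand.
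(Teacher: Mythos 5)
The paper does not prove this proposition; it is imported verbatim from Ref.~\cite{huang2020predicting} (in the same way \cref{prop:state_pauli_shadow} is obtained by combining Theorem~1 and Proposition~3 of that reference). Your proposal is a correct and faithful reconstruction of the cited reference's argument --- unbiased snapshot from the inverted measurement channel $\cM^{-1}(X)=(2^n+1)X-\tr(X)I$, the shadow-norm bound $\norm{O}_{\mathrm{shadow}}^2\le 3\tr(O_0^2)$ via the Clifford 3-design property, and median-of-means with a union bound --- so there is nothing to compare it against within the paper itself; the only nit is that the moment computation is a \emph{third} moment of the adjoint action (degree three in $U$ and in $U^\dagger$) rather than ``quartic in $U$,'' though you correctly identify the 3-design property as the ingredient that makes it tractable.
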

Compared to \cref{prop:state_pauli_shadow}, we can see that Pauli shadows have better complexity when the locality of the observables is low, but in the case of $k=n$, Clifford shadows give better complexity. Here, we show a similar result stating that Clifford shadows cannot provide any benefits for sample complexity under the assumption that the locality of the observable is $k=\mathcal{O}(1)$. Firstly, we have the following result when we change the input distribution:
\begin{proposition} \label{prop: Cliff coeff est}
    Given an $n$-qubit observable $O = \sum_P \alpha_P P$ and let $\mathcal{D}$ be a distribution of $n$-qubit states that is invariant under any $n$-qubit Clifford gate. Then for Pauli $P \in \{ I,X,Y,Z \}^{\otimes n} \backslash \{ I^{\otimes n}\}$, we have
    \begin{equation}
        \underset{\rho \sim \mathcal{D}}{\mathbb{E}} \tr(O\rho) \tr(P\rho) = \frac{2^n \mathbb{E}_{\rho \sim \mathcal{D}} \tr(\rho^2) - 1}{4^n - 1}\alpha_P.
    \end{equation}
    For  $P = I ^{\otimes n}$, we have that 
    \begin{equation}
        \underset{\rho \sim \mathcal{D}}{\mathbb{E}} \tr(O\rho) \tr(P\rho) = \alpha_I. 
    \end{equation}
\end{proposition}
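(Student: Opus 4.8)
The plan is to expand $O=\sum_Q \alpha_Q Q$ in the Pauli basis and reduce the claim to understanding the ``Gram matrix'' $M_{Q,P}\coloneqq \mathbb{E}_{\rho\sim\mathcal{D}}\tr(Q\rho)\tr(P\rho)$, since $\mathbb{E}_{\rho\sim\mathcal{D}}\tr(O\rho)\tr(P\rho)=\sum_Q \alpha_Q M_{Q,P}$. The whole argument then rests on exploiting the Clifford-invariance of $\mathcal{D}$: for any Clifford $U$ we have $\rho\sim\mathcal{D}\Rightarrow U\rho U^\dagger\sim\mathcal{D}$, and since $\tr(Q\,U\rho U^\dagger)=\tr(U^\dagger Q U\,\rho)$ with $U^\dagger Q U$ again a signed Pauli, the matrix $M$ must respect the symmetries of the Clifford action on Pauli operators.

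First I would show the off-diagonal entries vanish. Every single Pauli operator $R$ is itself a Clifford, and conjugation gives $RQR=(-1)^{\langle Q,R\rangle}Q$, so invariance of $\mathcal{D}$ yields $M_{Q,P}=(-1)^{\langle Q,R\rangle+\langle P,R\rangle}M_{Q,P}$ for every $R$. Whenever $Q\neq P$ one can choose $R$ that commutes with exactly one of $Q,P$ — this is possible because the commutation pairing is a non-degenerate symplectic form, so $QP\neq I$ guarantees some $R$ with $\langle QP,R\rangle=1$ — forcing $M_{Q,P}=-M_{Q,P}=0$. The same argument with $Q=I$ kills $M_{I,P}$ for $P\neq I$. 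Hence only the diagonal term $Q=P$ survives, giving $\mathbb{E}_{\rho\sim\mathcal{D}}\tr(O\rho)\tr(P\rho)=\alpha_P\,\mathbb{E}_{\rho\sim\mathcal{D}}\tr(P\rho)^2$.

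It remains to evaluate the common constant $c\coloneqq \mathbb{E}_{\rho\sim\mathcal{D}}\tr(P\rho)^2$. Using the full Clifford action (not just Pauli conjugation), which acts transitively on the $4^n-1$ non-identity Paulis up to sign, I would argue that $c$ is independent of the chosen $P\neq I$, since the sign squares away in $\tr(P\rho)^2$. To pin down its value I would invoke the Parseval-type identity $\sum_{P}\tr(P\rho)^2=2^n\tr(\rho^2)$ (from $\rho=2^{-n}\sum_P\tr(P\rho)P$ and $\tr(PP')=2^n\delta_{P,P'}$); isolating the $P=I$ term gives $\sum_{P\neq I}\tr(P\rho)^2=2^n\tr(\rho^2)-1$, so taking expectations and dividing by the count $4^n-1$ yields $c=(2^n\,\mathbb{E}_{\rho\sim\mathcal{D}}\tr(\rho^2)-1)/(4^n-1)$, which is exactly the stated constant. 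The special case $P=I^{\otimes n}$ is immediate: $\tr(I\rho)=1$ reduces the left side to $\mathbb{E}_{\rho\sim\mathcal{D}}\tr(O\rho)=\sum_Q\alpha_Q\,\mathbb{E}_{\rho\sim\mathcal{D}}\tr(Q\rho)$, and the same vanishing argument leaves only $\alpha_I$.

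The main obstacle is the evaluation of $c$ in the third step: the vanishing of the off-diagonals is a clean one-line symmetry, but fixing the overall normalization requires combining two separate ingredients — transitivity of the Clifford group on non-identity Paulis (to see all diagonal entries coincide) together with the operator-Parseval identity (to compute their sum). Some care is also needed to confirm that conjugating $\rho$ by a Clifford preserves $\tr(\rho^2)$, so that $\mathbb{E}_{\rho\sim\mathcal{D}}\tr(\rho^2)$ is a well-defined invariant of $\mathcal{D}$ entering the final expression.
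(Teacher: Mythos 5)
Your proof is correct, but it takes a genuinely different route from the paper's. The paper also reduces to the bilinear quantities $\mathbb{E}_{\rho\sim\mathcal{D}}\tr\big((Q\otimes P)(\rho\otimes\rho)\big)$, but then invokes the fact that the Clifford group is a unitary 2-design and evaluates the resulting Haar integral $\int (U\otimes U)(Q\otimes P)(U\otimes U)^\dagger \, d\mathcal{U}$ via the standard second-moment formula in terms of the identity and the swap operator $F$, reading off the constant from $\tr\big(F(\rho\otimes\rho)\big)=\tr(\rho^2)$. You instead decompose the argument into three elementary symmetry facts: invariance under conjugation by single Pauli operators kills the off-diagonal entries of the Gram matrix (via non-degeneracy of the commutation pairing), transitivity of the Clifford action on non-identity Paulis up to sign makes the diagonal constant, and the Parseval identity $\sum_P \tr(P\rho)^2 = 2^n\tr(\rho^2)$ pins down that constant. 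Your version buys self-containedness -- no Weingarten-type integral is needed, and it makes transparent exactly which pieces of Clifford structure are used -- while the paper's version is shorter once the Haar formula is granted and transfers verbatim to any exact 2-design. Both are complete; all your steps check out, including the bilinearity $\langle Q,R\rangle+\langle P,R\rangle=\langle QP,R\rangle \pmod 2$ underlying the off-diagonal cancellation and the isolation of the $P=I^{\otimes n}$ term contributing $\tr(\rho)^2=1$ to the Parseval sum.
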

\begin{proof}
    We follow the same spirit as proof of Lemma 16 of~\cite{huang2022learning}. Writing $O = \sum_Q \alpha_Q Q$, we have that 
    \begin{align}
        &\underset{\rho \sim \mathcal{D}}{\mathbb{E}} \tr(O \rho) \tr(P\rho) \\
        = &\sum_Q \alpha_Q \underset{\rho \sim \mathcal{D}}{\mathbb{E}} \tr(Q \rho) \tr(P \rho) \\
        = &\sum_Q \alpha_Q \underset{\rho \sim \mathcal{D}}{\mathbb{E}} \tr\big( (Q\otimes P) (\rho \otimes \rho) \big). \label{eqn: cont}
    \end{align}
    Since $\mathcal{D}$ is invariant under Clifford gates, we can conjugate $\rho$ by any Clifford gate and the expectation over $\rho$ does not change. In fact, we can conjugate $\rho$ by a random Clifford gate. let $U$ be a random $n$-qubit Clifford gate, we have that 
    \begin{align}
        &\sum_Q \alpha_Q \underset{\rho \sim \mathcal{D}}{\mathbb{E}} \tr \big( (Q \otimes P) (\rho \otimes \rho) \big) \\
        = &\sum_Q \alpha_Q \underset{\rho \sim \mathcal{D}}{\mathbb{E}} \, \underset{U}{\mathbb{E}} \tr\big( (Q \otimes P) (U \otimes U) (\rho \otimes \rho) (U\otimes U)^\dagger \big) \\
        = &\sum_Q \alpha_Q  \underset{\rho \sim \mathcal{D}}{\mathbb{E}} \, \underset{U}{\mathbb{E}} \tr\big( (U \otimes U)^\dagger (Q \otimes P) (U \otimes U) (\rho \otimes \rho) \big).
    \end{align}
    To evaluate $\mathbb{E}_U (U \otimes U)^\dagger (Q \otimes P) (U \otimes U)$, we use the property that Clifford gates form a 2-design which means that 
    \begin{equation}\label{eqn:exp to int}
        \underset{U}{\mathbb{E}} (U\otimes U)^\dagger (Q \otimes P) (U \otimes U) = \int_\mathcal{U} (U\otimes U) (Q\otimes P) (U\otimes U)^\dagger d\mathcal{U},
    \end{equation}
    where $\mathcal{U}$ is the Haar measure on the unitary group of $n$-qubits. To evaluate the integral, we can use standard result:
    \begin{proposition}
        Let $\mathcal{U}(d)$ be the unitary group of $\mathbb{C}^d$, then for any $X$ which is a linear operator acting on $\mathbb{C}^d \otimes \mathbb{C}^d$, we have that 
        \begin{equation}\label{eqn:Haar integral}
            \int_{\mathcal{U}(d)} (U \otimes U) X (U \otimes U)^\dagger d\mathcal{U} = \bigg( \frac{\tr(X)}{d^2 -1} - \frac{\tr(XF)}{d(d^2-1)} \bigg) id_{d^2} - \bigg( \frac{\tr(X)}{d(d^2-1)} - \frac{\tr(XF)}{d^2 - 1} \bigg) F,
        \end{equation}
        where $F$ is the swap operator.
    \end{proposition}
\Cref{eqn:Haar integral} allows us to evaluate the integral in \cref{eqn:exp to int} to obtain
\begin{equation}
    \int_\mathcal{U} (U \otimes U) (Q \otimes P) (U \otimes U)^\dagger d\mathcal{U} = \begin{cases}
        \frac{2^n}{4^n-1}F - \frac{1}{4^n-1}id_{d^2} & \text{if } Q = P \neq I^{\otimes n}, \\
        id_{d^2} & \text{if } Q = P = I^{\otimes n},\\
        0 & \text{if } Q\neq P,
    \end{cases}
\end{equation}
where $id$ denotes the identity map. Substitute this into \cref{eqn: cont} and we obtain the desired result. 
\end{proof}

This means that our empirical estimation is exponentially small in $n$, which gives an exponential factor to our sample complexity. 

Secondly, if we reconstruct $\mathcal{P}(\rho_i)$ using Clifford shadow, we replace $\bigotimes_j (3\ketbra{t_{ij}}{t_{ij}}-I)$ with $(2^n+1)U^\dagger \ketbra{b}{b} U - I$ during calculation of $\widehat{x}_P$. Our empirical estimation still equals to $\lambda_P$ in expectation but the range of value it takes extends to $[- (2^n+1), 2^n+1]$, which also gives an exponential factor to our sample complexity following Hoeffding's inequality.
\begin{remark}
For $k=n$, using Pauli shadows would give a $3^n$ factor to our sampling complexity. This is worse than both factors above which are about $2^n$. This agrees with the intuition that random Pauli is preferred only for local observables over random Clifford.
\end{remark}

\section{Extension to other channels} \label{apx: extension}
For a general channel $\mathcal{E}$, the action of the adjoint map of the channel $\mathcal{E}^\dagger$ is also a linear map. In fact, it is a completely positive unital map, and it can be written as a matrix. We denote $M$ be the matrix in Pauli basis. For Pauli channel, $M$ is a diagonal matrix. It is formally defined as $M_{PQ} = \frac{1}{2^n} \tr (P \mathcal{E}^\dagger (Q))$. This is very similar to the Pauli transfer matrix of a quantum channel.
\begin{definition}
    The \textbf{Pauli transfer matrix} $T$ of a quantum channel $\mathcal{N}$ is defined as $T_{PQ} = \frac{1}{2^n} \tr (P \mathcal{N}(Q))$.
\end{definition}

\begin{proposition} \label{prop:adjoint_ptm_relation}
Given a quantum channel $\mathcal{N}$, and let $T$ and $M$ be defined as above, then $M = T^\intercal$, the transpose of $T$ taken with respect to the Pauli basis.
\end{proposition}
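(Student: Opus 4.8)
The plan is to prove the identity entrywise, i.e.\ to show $M_{PQ} = T_{QP}$ for every pair of Pauli operators $P,Q$, since this is exactly what $M = T^\intercal$ means once both matrices are indexed by the Pauli basis.

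First I would unfold the definition $M_{PQ} = \frac{1}{2^n}\tr(P\,\mathcal{N}^\dagger(Q))$ and rewrite it using cyclicity of the trace as $\frac{1}{2^n}\tr(\mathcal{N}^\dagger(Q)\,P)$. Then I would invoke the defining property of the adjoint map recalled in \cref{subsec:channel prelim}, namely $\tr(\mathcal{N}^\dagger(X)Y) = \tr(X\,\mathcal{N}(Y))$, which holds for all linear operators $X,Y$. Setting $X=Q$ and $Y=P$ turns the expression into $\frac{1}{2^n}\tr(Q\,\mathcal{N}(P))$, which is precisely $T_{QP}$ by the definition of the Pauli transfer matrix. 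This short chain yields $M_{PQ}=T_{QP}=(T^\intercal)_{PQ}$, and since $P,Q$ were arbitrary the matrix identity $M=T^\intercal$ follows.

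There is essentially no hard step here: the whole argument is a one-line manipulation resting on the adjoint identity. The only point requiring a little care is bookkeeping which operator plays the role of $X$ and which plays $Y$ when applying the adjoint relation, so that after the substitution the Pauli operator $P$ sits \emph{inside} $\mathcal{N}$ and $Q$ sits outside, matching $T_{QP}$ rather than $T_{PQ}$; getting this swap right is exactly what produces the transpose. I would also remark that the adjoint identity is stated for arbitrary linear operators, so no appeal to the Hermiticity of Pauli operators is needed, even though the $P,Q$ appearing here happen to be Hermitian.
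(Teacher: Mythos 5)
Your argument is correct and is essentially identical to the paper's one-line proof: both unfold $M_{PQ}=\frac{1}{2^n}\tr(P\,\mathcal{N}^\dagger(Q))$, apply the defining adjoint identity (with cyclicity of trace) to move $\mathcal{N}$ onto $P$, and read off $T_{QP}$. Nothing is missing.
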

\begin{proof}
    $M_{PQ} = \frac{1}{2^n} \tr(P \mathcal{N}^\dagger (Q)) = \frac{1}{2^n} \tr(\mathcal{N}(P) Q) = T_{QP}$ where the second equality comes from definition of adjoint map.
\end{proof}

\begin{proposition}
    Let $A$ be an $n \times n$ upper block triangular matrix with block size $\{a_1, a_2, \ldots, a_m \}$. Let $s_i = \sum_{j=1}^{m} a_j$, so $s_m=n$ and the $s_i \times s_i$ submatrices of $A$ are all upper block triangular. Denote each submatrix as $A_i$ and assume each $A_i$ is invertible. Given $b$ with $b_k = 0$ $\forall k>s_i$ for some  $i$, then there exists $x$ with $x_k =0$ $\forall k>s_i$ such that $Ax = b$. To be more specific, $x$ is given by
    $\begin{pmatrix}
    A_i^{-1} b^{(i)} \\
    0
    \end{pmatrix}$ where $b^{(i)}$ denotes the subvector $(b_1, b_2, \ldots, b_{s_i})^\intercal$ of $b$.
\end{proposition}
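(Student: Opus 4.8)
The plan is to exploit the upper block triangular structure directly and reduce the claim to a single block matrix multiplication; since the candidate solution is already handed to us in the statement, the work here is verification rather than construction.

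First I would partition $A$ at the index $s_i$ into the $2\times 2$ block form
\[
A = \begin{pmatrix} A_i & B \\ 0 & C \end{pmatrix},
\]
where $A_i$ is the leading $s_i \times s_i$ principal submatrix, $B$ is $s_i \times (n-s_i)$, and $C$ is $(n-s_i)\times(n-s_i)$. The crucial input is the meaning of \emph{upper block triangular}: the part of $A$ lying below row $s_i$ and to the left of column $s_i$ is identically zero, which is exactly the lower-left $0$ in the display. Because the block boundaries are nested, $s_i$ is itself a legitimate block boundary, so this partition respects the block structure and $A_i$ is one of the upper block triangular principal submatrices assumed invertible; hence $A_i^{-1}$ is well defined.

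Next I would write the data and the proposed solution in the same block form. By hypothesis $b = \begin{pmatrix} b^{(i)} \\ 0 \end{pmatrix}$, where $b^{(i)}$ is the length-$s_i$ subvector, and the candidate $x = \begin{pmatrix} A_i^{-1} b^{(i)} \\ 0 \end{pmatrix}$ clearly satisfies $x_k = 0$ for all $k > s_i$. Carrying out the block multiplication,
\[
Ax = \begin{pmatrix} A_i & B \\ 0 & C \end{pmatrix}\begin{pmatrix} A_i^{-1} b^{(i)} \\ 0 \end{pmatrix} = \begin{pmatrix} A_i A_i^{-1} b^{(i)} \\ 0 \end{pmatrix} = \begin{pmatrix} b^{(i)} \\ 0 \end{pmatrix} = b,
\]
where the contributions of $B$ and $C$ drop out precisely because the lower component of $x$ vanishes. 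This establishes existence of the claimed solution.

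There is essentially no analytic obstacle here; the only point requiring care is the bookkeeping that the lower-left block of $A$ vanishes under the partition at $s_i$, i.e. that cutting at an intermediate block boundary preserves upper block triangularity. Once that structural observation is in place the result is immediate. If uniqueness within the subspace $\{x : x_k = 0 \text{ for } k > s_i\}$ were also wanted, it would follow from the invertibility of $A_i$, as $A_i x^{(i)} = b^{(i)}$ then has the unique solution $x^{(i)} = A_i^{-1} b^{(i)}$.
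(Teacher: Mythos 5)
Your proof is correct and is exactly the ``direct verification'' that the paper itself gives as its entire proof: partition $A$ at the block boundary $s_i$, note the vanishing lower-left block, and check $Ax=b$ by block multiplication. Nothing further is needed.
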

\begin{proof}
    Direct verification.
\end{proof}

The above observation ensures that if $M$ is upper block triangular with block $i$ being the set of $i$-local Pauli operators, then to find the inverse of a $k$-local observable under the adjoint map, one can achieve by finding $s_k \times s_k$ submatrix of $M$ and the inverse is also a $k$-local observable. 
From now on, when we refer to the term `upper block triangular', we mean upper block triangular with block $i$ being the set of $i$-local Pauli operators. Pauli channel being a diagonal channel automatically satisfies the upper block triangular condition. Another class of channels that meets this criterion is product channel. Using \cref{prop:adjoint_ptm_relation}, this condition also translates to the Pauli transfer matrix being lower block triangular.
\begin{proposition}
    If a noise channel factorises, i.e. $\mathcal{N} = \bigotimes_i \mathcal{N}_i $, then the matrix of $\mathcal{N}^\dagger$ in Pauli basis is upper block triangular. 
\end{proposition}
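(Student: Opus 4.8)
The plan is to show directly that the matrix entry $M_{PQ} = \frac{1}{2^n}\tr(P\,\mathcal{N}^\dagger(Q))$ vanishes whenever $\abs{P} > \abs{Q}$, since this is exactly the condition that $M$ be upper block triangular with block $i$ consisting of the weight-$i$ Pauli operators. Equivalently, I would argue that $\mathcal{N}^\dagger$ is weight non-increasing: for every Pauli $Q$, the Pauli expansion of $\mathcal{N}^\dagger(Q)$ contains no term of weight strictly greater than $\abs{Q}$.

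First I would use the factorisation to pass the adjoint through the tensor product, writing $\mathcal{N}^\dagger = \bigotimes_i \mathcal{N}_i^\dagger$, so that for a product Pauli $Q = \bigotimes_i Q_i$ one has $\mathcal{N}^\dagger(Q) = \bigotimes_i \mathcal{N}_i^\dagger(Q_i)$. The crux is then to control each single-qubit factor. On every qubit $i$ where $Q_i = I$, unitality of the adjoint map (recalled in \cref{subsec:channel prelim}, where $\mathcal{N}^\dagger$ is completely positive and unital, hence $\mathcal{N}_i^\dagger(I) = I$) forces that factor to remain exactly the identity. On the remaining qubits, where $Q_i \in \{X,Y,Z\}$, the single-qubit Hermitian operator $\mathcal{N}_i^\dagger(Q_i)$ expands as some linear combination $\sum_{R \in \{I,X,Y,Z\}} c^{(i)}_R R$, which may contribute both identity and non-identity factors.

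Expanding the tensor product into the Pauli basis, every Pauli term $P = \bigotimes_i P_i$ appearing in $\mathcal{N}^\dagger(Q)$ therefore satisfies $P_i = I$ at each qubit where $Q_i = I$. Consequently $\supp(P) \subseteq \supp(Q)$, which gives $\abs{P} \leq \abs{Q}$, and hence $M_{PQ} = 0$ for all $P$ with $\abs{P} > \abs{Q}$. This is precisely the claimed upper block triangular structure with blocks ordered by increasing weight.

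I do not expect a genuine obstacle here: once the factorisation is exploited, the argument is essentially bookkeeping on supports. The single point that must not be overlooked is that the identity factors are preserved \emph{exactly} — this relies on unitality of the single-qubit adjoint maps rather than on any special structure of the $\mathcal{N}_i$, and it is precisely what confines the support of the output and prevents the weight from increasing.
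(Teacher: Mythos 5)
Your proposal is correct and follows essentially the same route as the paper's proof: factorise the adjoint as $\mathcal{N}^\dagger = \bigotimes_i \mathcal{N}_i^\dagger$, use unitality of each single-qubit adjoint to preserve the identity factors exactly, and conclude that the support (hence the weight) of $\mathcal{N}^\dagger(Q)$ cannot exceed that of $Q$, which is equivalent to the upper block triangular structure. The only difference is presentational: you phrase the conclusion as the vanishing of the matrix entries $M_{PQ}$ for $\abs{P} > \abs{Q}$, whereas the paper states it directly as weight contraction of the adjoint map.
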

\begin{proof}
    We notice that a necessary and sufficient condition for an adjoint map to be upper triangular is that it preserves the weight of a Pauli operator. That is, the Pauli decomposition of $\mathcal{N}^\dagger (P) $ does not contain terms whose weight is greater than $P$, i.e. 
    \begin{equation}
        \mathcal{N}^\dagger(P) = \sum_{Q : \abs{Q}\leq \abs{P}} \alpha_Q Q . 
    \end{equation}
    We say that $\mathcal{N}^\dagger$ is weight contracting. If a noise channel $\mathcal{N}$ factorises, then its adjoint map also factorises, i.e. $\mathcal{N}^\dagger = \bigotimes_i \mathcal{N}^\dagger_i$. Given a Pauli operator $P = \bigotimes_i P_i$, $\mathcal{N}^\dagger (P) = \bigotimes_i \mathcal{N}^\dagger_i (P_i)$. If $P_i = I$, $\mathcal{N}^\dagger_i (P_i) = I$ since each $\mathcal{N}^\dagger_i$ is an adjoint map itself and hence is unital. Therefore the weight of $\mathcal{N}^\dagger (P)$ does not increases, hence $\mathcal{N}^\dagger$ is weight contracting.
\end{proof}

To extend the previous method, instead of calculating the diagonal elements, we need to calculate the entire upper triangular block matrix. Then taking the inverse of the resultant matrix gives us estimation for $\overleftarrow{O}$. Hence we can generalise the algorithm and theorem in \cref{sec:alg,sec:analysis} to extend the set of channels which we can recover information for. Before doing that, we will show how to estimate the non-diagonal element of $M$.

\begin{lemma}[Lemma 16 of Ref.~\cite{huang2022learning}] \label{lem:estimate coeff}
    Given an n-qubit observable $O = \sum_P \alpha_P P$ and let $\mathcal{D}^0$ be the uniform distribution of product state of Pauli eigenstates. Then for Pauli $P \in \{ I,X,Y,Z \}^{\otimes n}$, we have
    \begin{equation}
    \underset{\rho \sim \mathcal{D}^0}{\mathbb{E}} \tr(O\rho) \tr(P\rho) = \bigg( \frac{1}{3} \bigg)^{|P|} \alpha_P        .
    \end{equation}
\end{lemma}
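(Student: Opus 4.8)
The plan is to reduce the claim to a single-qubit orthogonality relation by exploiting the product structure of $\mathcal{D}^0$. First I would expand $O = \sum_Q \alpha_Q Q$ and use linearity of the expectation to write
\[
\underset{\rho \sim \mathcal{D}^0}{\mathbb{E}} \tr(O\rho)\tr(P\rho) = \sum_Q \alpha_Q \, \underset{\rho \sim \mathcal{D}^0}{\mathbb{E}} \tr(Q\rho)\tr(P\rho).
\]
Since a sample from $\mathcal{D}^0$ is a tensor product $\rho = \bigotimes_i \rho_i$ with each $\rho_i$ drawn independently and uniformly from the six single-qubit Pauli eigenstates, and since $\tr(Q\rho) = \prod_i \tr(Q_i \rho_i)$, the expectation factorises across qubits into $\prod_i \mathbb{E}_{\rho_i} \tr(Q_i \rho_i)\tr(P_i \rho_i)$.

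The key step is then to evaluate the single-qubit factor by a finite case analysis over $(P_i, Q_i) \in \{I,X,Y,Z\}^2$. I would verify that this factor equals $1$ when $P_i = Q_i = I$; equals $1/3$ when $P_i = Q_i \neq I$ (an eigenstate of $P_i$ occurs with probability $1/3$ and contributes $\tr(P_i\rho_i)^2 = 1$, while the remaining states contribute $0$); and equals $0$ whenever $P_i \neq Q_i$. The vanishing splits into two subcases: the mixed case $P_i = I,\ Q_i \neq I$, where $\mathbb{E}_{\rho_i}\tr(Q_i \rho_i) = 0$ by the $\pm 1$ symmetry within each eigenstate pair, and the case of distinct non-identity Paulis, where at least one of the two trace factors vanishes on every one of the six states. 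Compactly this is the relation $\mathbb{E}_{\rho_i}\tr(P_i\rho_i)\tr(Q_i\rho_i) = \delta_{P_i Q_i}(1/3)^{[P_i \neq I]}$, with $[P_i\neq I]$ the indicator that $P_i$ is non-identity.

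Finally I would take the product over the $n$ qubits: every factor with $Q_i \neq P_i$ contributes a zero, so the whole product vanishes unless $Q = P$, in which case it equals $(1/3)^{|P|}$. Substituting back collapses the sum to its single surviving term $\alpha_P (1/3)^{|P|}$, giving the claim. The only genuine work is the single-qubit case check; this exactly mirrors the computation already carried out for \cref{lem:pauli evalue est}, of which the present statement is the natural generalisation from the special observable $O = \mathcal{P}(P) = \lambda_P P$ to an arbitrary $O = \sum_Q \alpha_Q Q$, so I anticipate no obstacle beyond organising that finite enumeration cleanly.
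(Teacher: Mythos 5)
Your proof is correct. The paper itself does not prove \cref{lem:estimate coeff} --- it imports it verbatim as Lemma 16 of the cited reference --- but your argument is exactly the natural one and matches the technique the paper does use for the special case \cref{lem:pauli evalue est} in Appendix~A (factorise the expectation over qubits, then a finite single-qubit case check); the only additional ingredient you need beyond that proof is the off-diagonal orthogonality $\mathbb{E}_{\rho_i}\tr(P_i\rho_i)\tr(Q_i\rho_i)=0$ for $P_i\neq Q_i$, and your case analysis establishes it correctly.
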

If we take $O = \mathcal{E}^\dagger (Q)$ where $Q$ is a Pauli operator, then the above lemma allows us to obtain the coefficient of $\mathcal{E}^\dagger (Q)$, which corresponds to column $Q$ of $M$. More specifically,
\begin{equation}
    \bigg( \frac{1}{3} \bigg)^{|P|} M_{PQ} = \underset{\rho \sim \mathcal{D}^0}{\mathbb{E}} \tr (\mathcal{E}^\dagger (Q) \rho) \tr(P \rho) = \underset{\rho \sim \mathcal{D}^0}{\mathbb{E}} \tr (Q \mathcal{E}(\rho)) \tr(P \rho).
\end{equation}
Another difference is that instead of dividing each coefficient of our observable by the corresponding estimated eigenvalue, we need to apply the inverse of our estimated $M$ to the vector of coefficients. This involves inverting a square matrix whose dimension of $\mathcal{O}(n^k)$ which has computational complexity of $\mathcal{O}(n^{3k})$. The updated algorithm and its complexity are as follows:
\begin{algorithm}[H]
\caption{Modified algorithm for weight contracting channel}
\label{alg:update}
\begin{algorithmic}[1]
    \Require Unknown weight contracting channel $\mathcal{E}$, $k$-local observable $O=\sum_P \alpha_P P$, copies of unknown noisy state $\mathcal{E}(\sigma)$
    \Ensure Estimation of $\tr (O\sigma)$
    \State Prepare $N$ random product Pauli eigenstates $\{ \rho_i = \bigotimes_{j=1}^n \ketbra{s_{ij}}{s_{ij}} \}_{i=1}^N$. Send them through the unknown channel $\mathcal{E}$ and make random Pauli measurements on each qubit and obtain data $\{ \bigotimes_{j=1}^n \ketbra{t_{ij}}{t_{ij}} \}_{i=1}^N$.
    \State For each n-qubit Pauli operator $Q$ with $\lvert Q \rvert\leq k$, compute $\widehat{\lambda}_P(Q)$ described below for P such that |P|$\leq$ |Q|. \Comment{The computation time is $\mathcal{O}(n^{2k} \cdot N \cdot k)$}
    \State Construct matrix $\tilde{M}$ such that $\tilde{M}_{PQ} = \widehat{\lambda}_P(Q)$. $\tilde{M}$ is an upper block triangular matrix.
    \State For $O = \sum_{P:|P|\leq k} \alpha_P P$. Let $\overleftarrow{\alpha} = \tilde{M}^{-1} \alpha$. \Comment{The computation time is $\mathcal{O}(n^{3k})$}
    \State To make prediction, given $\mathcal{E}(\sigma)$, we compute $f(\mathcal{E}(\sigma)) = \sum_{P : |P|\leq k} \overleftarrow{\alpha}_P \tr (P \mathcal{E}(\sigma))$. \Comment{The computation time is $\mathcal{O}(n^k)$}
\end{algorithmic}
\end{algorithm}

We compute $\widehat{\alpha}_P(Q)$ as follows:

We first let
\begin{align}
\widehat{x}_P(Q) &= \frac{1}{N} \sum_{i=1}^{N} \tr \big(Q \bigotimes_{j=1}^n (3 \ketbra{t_{ij}}{t_{ij}} - I) \big) \tr (P \rho_i) \\
  &= \frac{1}{N} \sum_{i=1}^N \prod_{j=1}^n \tr(Q_j (3 \ketbra{t_{ij}}{t_{ij}} - I)\tr(P_j \ketbra{s_{ij}}{s_{ij}}),
\end{align}
then
\begin{equation}
    \widehat{\lambda}_P(Q) = 3^{|P|} \widehat{x}_P(Q).
\end{equation}
Again, we have the special case of $\lambda_P(I) = \delta_{PI}$ from $\mathcal{E}^\dagger$ being a unital map, so there is no need to perform estimation.

\renewcommand\theproposition{\ref{prop:extension}}
\setcounter{proposition}{\arabic{proposition}-1}
\begin{proposition}
Given an unknown $n$-qubit weight contracting channel $\mathcal{E}$, a noisy state $\cE(\sigma)$, and an $n$-qubit bounded-degree $k$-local observable O with $\norm{O}_\infty = 1$. For $\eps, \delta > 0$, there exists an algorithm that uses $N = \mathcal{O}(n^{2k}\log (n^{2k}/ \delta)/\epsilon^2)$ access to the channel to obtain a function $f$ such that
\begin{equation}
\abs{ f(\mathcal{E}(\sigma)) - \tr(O\sigma)} \leq \epsilon 
\end{equation}
with probability at least $1-\delta$. The computation time is $\mathcal{O}(n^{4k}\log(n^{2k}))$.
\end{proposition}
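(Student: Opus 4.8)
The plan is to mirror the proof of \cref{Thm:Main}, replacing the scalar inversion $\overleftarrow{\alpha}_P = \alpha_P/\widehat{\lambda}_P$ by a matrix inversion $\overleftarrow{\alpha} = \tilde{M}^{-1}\alpha$. The starting observation is that the defining condition $\mathcal{E}^\dagger(\overleftarrow{O}) = O$ reads, in the Pauli basis, as the linear system $M\overleftarrow{\alpha} = \alpha$, where $M_{PQ} = \frac{1}{2^n}\tr(P\mathcal{E}^\dagger(Q))$ is the $P$-coefficient of $\mathcal{E}^\dagger(Q)$. Since $\mathcal{E}$ is weight contracting, $M$ is upper block triangular, and by the block-triangular solvability property established above, a $k$-local right-hand side $\alpha$ (supported on $\abs{P}\leq k$) yields a $k$-local solution. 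Hence the entire computation can be confined to the $T(n,k)\times T(n,k)$ principal submatrix $M_k$ indexed by Pauli operators of weight at most $k$, with $T(n,k) = \mathcal{O}(n^k)$.

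First I would estimate the entries of $M_k$. Applying \cref{lem:estimate coeff} with $O = \mathcal{E}^\dagger(Q)$ and using $\tr(\mathcal{E}^\dagger(Q)\rho) = \tr(Q\mathcal{E}(\rho))$ shows that $\widehat{\lambda}_P(Q) = 3^{\abs{P}}\widehat{x}_P(Q)$ is an unbiased estimator of $M_{PQ}$, since its expectation over $\mathcal{D}^0$ equals $(1/3)^{\abs{P}}M_{PQ}$. Each summand defining $\widehat{x}_P(Q)$ is a product of single-qubit factors bounded in magnitude by $3$ on the $\abs{Q}\leq k$ non-identity sites, so it lies in $[-3^k,3^k]$. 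A Hoeffding bound together with a union bound over the $\mathcal{O}(n^{2k})$ relevant entries (those with $\abs{P}\leq\abs{Q}\leq k$) then guarantees $\abs{\widehat{\lambda}_P(Q) - M_{PQ}}\leq\tilde{\epsilon}$ simultaneously with probability at least $1-\delta$, using $N = \mathcal{O}(\log(n^{2k}/\delta)/\tilde{\epsilon}^2)$ accesses, where the $3^{\mathcal{O}(k)}$ range factor is absorbed into the constant since $k=\mathcal{O}(1)$.

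The crux is error propagation through the inversion, which is both the main obstacle and the source of the extra $n^{2k}$ factor. Writing $\tilde{M} = M + E$ with $\abs{E_{PQ}}\leq\tilde{\epsilon}$ entrywise and $\overleftarrow{\alpha} = \tilde{M}^{-1}\alpha$, I would use $M\tilde{M}^{-1} - I = -E\tilde{M}^{-1}$ together with \cref{Difference} to obtain
\begin{align}
\abs{f(\mathcal{E}(\sigma)) - \tr(O\sigma)}
&= \abs{\tr\big((\mathcal{E}^\dagger(\overleftarrow{O}) - O)\sigma\big)}
\leq \norm{(M\tilde{M}^{-1} - I)\alpha}_1 \nonumber\\
&= \norm{E\,\overleftarrow{\alpha}}_1
\leq \Big(\max_Q \sum_P \abs{E_{PQ}}\Big)\norm{\overleftarrow{\alpha}}_1
\leq T(n,k)\,\tilde{\epsilon}\,\norm{\overleftarrow{\alpha}}_1.
\end{align}
Here $\norm{\overleftarrow{\alpha}}_1 = \paulinorm{\overleftarrow{O}}{1}$ is controlled by a constant: for small $\tilde{\epsilon}$ standard perturbation makes $\tilde{M}_k$ invertible with $\norm{\tilde{M}_k^{-1}}$ comparable to $\norm{M_k^{-1}}$, and \cref{lem:Pauli norm} bounds $\norm{\overleftarrow{\alpha}}_1$ in terms of $\norm{O}_\infty = 1$. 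Treating the conditioning of $M_k$ as $\mathcal{O}(1)$ (the analogue of the $1/\lambda^{(k)}_{\min}$ factor in the Pauli case), the bound reduces to $\mathcal{O}(n^k)\tilde{\epsilon}$.

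Setting $\tilde{\epsilon} = \Theta(\epsilon/n^k)$ makes the error at most $\epsilon$ and yields $N = \mathcal{O}(\log(n^{2k}/\delta)/\tilde{\epsilon}^2) = \mathcal{O}(n^{2k}\log(n^{2k}/\delta)/\epsilon^2)$, as claimed. For the runtime, estimating the $\mathcal{O}(n^{2k})$ entries of $M_k$ costs $\mathcal{O}(n^{2k}\cdot N\cdot k) = \mathcal{O}(n^{4k}\log(n^{2k}))$ (step~2 of \cref{alg:update}), which dominates the $\mathcal{O}(n^{3k})$ cost of inverting $\tilde{M}_k$ and the $\mathcal{O}(n^k)$ cost of the final contraction. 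The hard part throughout is making the perturbation argument rigorous—guaranteeing invertibility of $\tilde{M}_k$ with high probability and bounding $\norm{\tilde{M}_k^{-1}}$ uniformly—which replaces the trivial scalar bound available in the diagonal Pauli case.
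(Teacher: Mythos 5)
Your proposal is correct and follows essentially the same route as the paper's proof: estimate the $\mathcal{O}(n^{2k})$ entries of the principal submatrix via the adapted \cref{lem:estimate coeff} with Hoeffding and a union bound, propagate the entrywise error $\tilde{\epsilon}$ through the matrix inversion while treating the conditioning of $M$ as an $n$-independent constant (the paper's $B(M)=\norm{M}_1\norm{M^{-1}}_1^2$), invoke \cref{lem:Pauli norm} to control $\paulinorm{O}{1}$, and set $\tilde{\epsilon}=\Theta(\epsilon/n^k)$. The only cosmetic difference is that you use the exact identity $M\tilde{M}^{-1}-I=-E\tilde{M}^{-1}$ where the paper uses the first-order Neumann expansion $(M+\Delta)^{-1}\approx M^{-1}-M^{-1}\Delta M^{-1}$; both leave the uniform bound on the inverse's norm as an implicit assumption, which you correctly flag.
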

\renewcommand{\theproposition}{S\arabic{proposition}}

\begin{proof}
The general structure of the proof is the same as before, but we have more coefficients. For ease of comparison to previous proof, we let $\lambda_P(Q) = M_{PQ} = \frac{1}{2^n} \tr(P \mathcal{N}^\dagger (Q))$, $x_P(Q) = (1/3)^{|P|} \lambda_P(Q)$. $D(n,k)$ be the number of $n$-qubit Pauli operator with weight $k$ and $T(n,k) = \sum_{l=0}^k D(n,l)$.

We let $\tilde{\epsilon}$ to be determined later, $\tilde{\epsilon}' = \tilde{\epsilon}/3^k$. By Hoeffding's inequality, we have

\begin{equation}
     \mathbb{P} (\abs{\widehat{x}_P(Q) - x_P(Q)} > \tilde{\epsilon}') \leq 2 e^{-N\tilde{\epsilon}'^2 / 2\times 3^{2k}} \quad \forall P,Q,|P|\leq|Q|\leq k.
\end{equation}
By a union bound,
\begin{align}
    &\mathbb{P} (|\widehat{x}_P(Q) - x_P(Q)| \leq \tilde{\epsilon}'\quad \forall P,Q,|P|\leq|Q|\leq k)\\
    \geq &1 - \bigcup_{P,Q} \mathbb{P}( |\widehat{x}_P(Q) - x_P(Q)| > \tilde{\epsilon})\\
    \geq &1 - 2 \sum_{l=0}^k D(n,l) \sum_{m=0}^{l} D(n,m)  e^{-N\tilde{\epsilon}'^2 / 2\times 3^{2k}} \\
    = &1 - 2 \sum_{l=0}^k D(n,l)T(n,l)  e^{-N\tilde{\epsilon}'^2 / 2\times 3^{2k}}. \\
\end{align} 
Equating the last term to $1-\delta$, we get that we require
\begin{equation}
    N = \frac{2\times 3^{2k}\log(2\sum_l D(n,l)T(n,l)/\delta)}{\tilde{\epsilon}'^2}.
\end{equation}
We note that $\sum_l D(n,l)T(n,l) < T(n,k)^2 = \mathcal{O}(n^{2k})$, so it holds that $N = \mathcal{O} (\log (n^{2k}/\delta)/ \tilde{\epsilon}'^2)$. By definition of $\tilde{\epsilon}'$, this $N$ also ensures that with probability at least $1-\delta$, 
\begin{equation}
|\widehat{\lambda}_P (Q) - \lambda_P (Q) | \leq \tilde{\epsilon} \quad \forall P,Q,|P|<|Q|\leq k.
\end{equation}

Conditioned on this, following the same error analysis, we have that 
    \begin{align}
        &  \abs{ \tr (\mathcal{E}^\dagger (\overleftarrow{O}) \sigma) - \tr (O \sigma) } \\
        = &\sum_{P:\abs{P}\leq k}  \abs{(M \overleftarrow{\alpha})_P - \alpha_P} \\
        \leq & \norm{ M \overleftarrow{\alpha} - \alpha }_1\\
        = & \norm{ M (\tilde{M})^{-1} \alpha - \alpha}_1.
    \end{align}
We can write $\tilde{M} = M + \Delta$, where by construction, each element of $\Delta$ has modulus at most $\tilde{\epsilon}$. So $\lVert \Delta \rVert_1 \leq T(n,k) \tilde{\epsilon}$. We also use the fact that $(M+\Delta)^{-1} = M^{-1} - M^{-1}\Delta M^{-1} + \mathcal{O}( \lVert \Delta \rVert^2)$. Assuming $\mathcal{O} (\lVert \Delta \rVert^2)$ is negligible, 
\begin{align}
    &\lVert M (\tilde{M})^{-1} \alpha - \alpha \rVert_1\\
    \leq &\lVert M \rVert_1 \lVert (M^{-1} - M^{-1} \Delta M^{-1}) \alpha - M^{-1} \alpha \rVert_1\\
    = &\lVert M \rVert_1 \lVert - M^{-1} \Delta M^{-1} \alpha \rVert_1\\
    \leq & \lVert M\rVert_1 \lVert M^{-1} \rVert_1 \lVert M^{-1} \rVert_1 \lVert \Delta \rVert_1 \lVert \alpha \rVert_1  \\
    \leq &B(M)  T(n,k) \tilde{\epsilon} \bigg( \frac{3\lVert O \rVert}{C(k,d)} \bigg),
\end{align}
where we define $B(M) = \lVert M \rVert_1 \lVert M^{-1} \rVert^2_1 $ and note that $\lVert \alpha \rVert_1 = \paulinorm{O}{1}$. Again, $d$ is the degree of $O$.\\
By setting
\begin{equation}
    \tilde{\epsilon} = \frac{C(k,d)}{3 B(M) T(n,k)} \epsilon,
\end{equation} 
we have 
\begin{equation}
    \lvert f(\mathcal{E}(\sigma)) - \tr(O\sigma) \rvert \leq \epsilon. 
\end{equation} 
In conclusion, $N = \mathcal{O}(\log(n^{2k}/\delta)/\tilde{\epsilon}^2) = \mathcal{O}(n^{2k}\log(n^{2k}/\delta)/ \epsilon^2)$, and the corresponding total computation time is $\mathcal{O}(n^{4k}\log(n^{2k}))$.
\end{proof}

\section{Application in error mitigation for Clifford circuits} \label{apx:Clifford application}
In error mitigation, the most widely used noise model is GTM, stands for gate-independent, time-stationary Markovian noise. In the case of a Clifford circuit made of H, S and CNOT gates, it means that every time a gate is used, it is followed by the same noise channel. We consider the slightly stronger case where the noise is dependent on gate, i.e. we assume that the same channel follows the same type of gates and each noise channel is a Pauli channel. Hence given a circuit $\mathcal{C}$ and arbitrary input state $\sigma$, the ideal output state is $\mathcal{C}(\sigma)$ but instead, we get the noisy state $\tilde{\mathcal{C}}(\sigma)$ where the noise is described as above. And subsequently, when measured using an observable $O$, this would produce an error. The goal of error mitigation is recovering the true value of $\tr (O \mathcal{C}(\sigma))$ when we only have access to the noisy states $\tilde{\mathcal{C}}(\sigma)$. We show how we can adapt our method to perform this task when the observable is $k$-local. The main idea is still first by accessing the gates with random Pauli eigenstates to learn the eigenvalues of the Pauli channels, then using properties of Clifford gates being stabilizers of the Pauli groups, we show how we still only need to scale coefficients of the observables to make an estimation, similar to what we did in \cref{Thm:Main}. 
In the learning part, we once again send random Pauli eigenstates to a single gate and then measure in random Pauli basis. The circuit can be depicted as:
\[\begin{array}{c}
\Qcircuit @C=1em @R=1em{
    & \lstick{\rho_i} & \gate{U} & \gate{\mathcal{P}_U} & \qw}
\end{array}\]
where $U \in \{H,S,CNOT \}$, $\mathcal{P}_U$ is the corresponding Pauli noise and $\rho_i$ is a Pauli eigenstate. The effect of single qubit $U$ acting on $\rho_i$ simply changes it to a different Pauli eigenstate. This brings it back to the same setting as \cref{Thm:Main}, and we can estimate eigenvalues of $\mathcal{P}_U$ following \cref{alg:main} except in \cref{eqn:xp}, we need to change it to
\begin{equation}
    \widehat{x}_P = \frac{1}{N} \sum_{i=1}^N \tr( P ( 3\ketbra{t_{i}}{t_{i}} - I)) \tr (P (U\rho_i U^\dagger)).
\end{equation}
For $\cnot$ gate, the matrix of noisy $\cnot$ gate is a monomial matrix. If we label entries by Pauli operators, then $(\pi(P),P)$ entry takes the value of $\lambda_P$ where $\pi(P)$ is the resultant Pauli operator by conjugating $P$ by $\cnot$. In this case, by modifying \cref{lem:estimate coeff}, we can estimate $\lambda_P$ as follows:
\begin{align}
    \widehat{x}_P &= \frac{1}{N}  \sum_{i=1}^N \tr (P \bigotimes_{j=1}^2 (3 \ketbra{t_{ij}}{t_{ij}} - I)) \tr (\pi(P) \rho_i), \\
    \widehat{\lambda}_P &= 3^{|\pi(P)|} \widehat{x}_P.
\end{align}

Next, given an $n$-qubit Clifford circuit $\mathcal{C}$, it can be written as 
\begin{equation}
    \mathcal{C} = \mathcal{U}_d \circ \mathcal{U}_{d-1} \circ \cdots \circ \mathcal{U}_1 ,
\end{equation} 
where $\mathcal{U}_i (\rho) = U_i \rho U_i^\dagger$, $U_i \in \{ H,S,CNOT \}$. The corresponding noisy circuit $\tilde{\mathcal{C}}$ is given by 
\begin{equation}
    \tilde{\mathcal{C}} = \mathcal{P}_{\mathcal{U}_d} \circ \mathcal{U}_d \circ \cdots \circ \mathcal{P}_{\mathcal{U}_1} \circ \mathcal{U}_1. 
\end{equation} 
Furthermore, since each $U_i$ is a Clifford gate, it permutes the $n$-qubit Pauli group. We denote the permutation as $\pi_i$, i.e. $\pi_i(P) = U_i^\dagger P U_i$. The eigenvalue of $P$ under $\mathcal{P}_{\mathcal{U}_i}$ is denoted as $\lambda_{i,P}$. We note that after the learning part, both $\pi_i$ and $\lambda_{i,P}$ can be efficiently computed classically. (since we can view each gate (and noise) as gate (and noise) on $n$-qubits by taking tensor product with identity.)

Finally, given $k$-local observable $O = \sum \alpha_P P$ and copies of $\tilde{\mathcal{C}}(\sigma)$, we would like to find estimation for $\tr(O \mathcal{C}(\sigma))$. The idea would be once again find $\overleftarrow{O} = \sum_P \overleftarrow{\alpha}_P P$ so that $\tr (\overleftarrow{O} \tilde{\mathcal{C}}(\sigma)) = \tr(O \mathcal{C}(\sigma))$. Take $d=2$ as an example and it can be extended to arbitrary $d$ by induction. Let 
\begin{equation}
    \tilde{\mathcal{C}} = \mathcal{P}_{\mathcal{U}_2} \circ \mathcal{U}_2 \circ \mathcal{P}_{\mathcal{U}_1} \circ \mathcal{U}_1,
\end{equation}  
then $\tr (\overleftarrow{O} \tilde{\mathcal{C}}(\sigma))$ can be written as 
\begin{equation}
    \tr(\overleftarrow{O} \tilde{\mathcal{C}}(\sigma)) = \sum_P \overleftarrow{\alpha}_P \tr( P \: \mathcal{P}_{\mathcal{U}_2} \circ \mathcal{U}_2 \circ \mathcal{P}_{\mathcal{U}_1} \circ \mathcal{U}_1 (\sigma)) .
\end{equation}
Considering the trace term, we can simplify it as follows:
\begin{align}
    \tr( P\mathcal{P}_{\mathcal{U}_2} \circ \mathcal{U}_2 \circ \mathcal{P}_{\mathcal{U}_1} \circ \mathcal{U}_1 (\sigma)) &= \tr( \mathcal{P}_{\mathcal{U}_2}(P) \: \mathcal{U}_2 \circ \mathcal{P}_{\mathcal{U}_1} \circ \mathcal{U}_1 (\sigma)) \\
    &= \lambda_{2,P} \tr(P \: \mathcal{U}_2 \circ \mathcal{P}_{\mathcal{U}_1} \circ \mathcal{U}_1 (\sigma)) \\
    &=  \lambda_{2,P} \tr( \mathcal{U}^\dagger_2 (P) \: \mathcal{P}_{\mathcal{U}_1} \circ \mathcal{U}_1 (\sigma)) \\
    &= \lambda_{2,P} \tr( \mathcal{P}_{\mathcal{U}_1}(\pi_2(P)) \: \mathcal{U}_1 (\sigma)) \\
    &= \lambda_{2,P} \lambda_{1, \pi_2(P)} \tr (\pi_2(P) \: \mathcal{U}_1 (\sigma)) \\
    &= \lambda_{2,P} \lambda_{1, \pi_2(P)} \tr (\mathcal{U}^\dagger_2 (P) \: \mathcal{U}_1 (\sigma)) \\
    &= \lambda_{2,P} \lambda_{1, \pi_2(P)} \tr (P \: \mathcal{U}_2 \circ \mathcal{U}_1 (\sigma))\\
    &= \lambda_{2,P} \lambda_{1, \pi_2(P)} \tr (P \: \mathcal{U}_2 \circ \mathcal{U}_1 (\sigma)) \\
    &= \lambda_{2,P} \lambda_{1, \pi_2(P)} \tr (P \: \mathcal{C} (\sigma)).
\end{align}
Hence by setting
\begin{equation}
    \overleftarrow{\alpha}_P = \frac{\alpha_P}{\lambda_{2,P} \lambda_{1, \pi_2(P)}},
\end{equation}
we have that 
\begin{align}
    \tr(\overleftarrow{O} \tilde{\mathcal{C}}(\sigma)) &= \sum_P \overleftarrow{\alpha}_P \tr( P\mathcal{P}_{\mathcal{U}_2} \circ \mathcal{U}_2 \circ \mathcal{P}_{\mathcal{U}_1} \circ \mathcal{U}_1 (\sigma)) \\    
    &= \sum_P \alpha_P \tr(P \: \mathcal{C}(\sigma)) \\
    &= \tr(P \mathcal{C}(\sigma)).
\end{align}

\textit{SPAM error.}
Random Pauli input states are prepared by applying a single gate to the zero state. Suppose these gates also have Pauli noise (so there is state preparation error), how does this affect our estimation? The effect of Pauli noise on Pauli eigenstates is flipping it to the eigenstate with the negative eigenvalue with probability related to the parameters of the channel. For example, $X\ket{0} = \ket{1}, \ Y\ket{0} = \ket{1}, \ Z\ket{0}=\ket{0}, \ I\ket{0}=\ket{0}$. So if Pauli channel has probability $\{ p_I, p_X, p_Y, p_Z \}$, then it will keep $\ket{0}$ unchanged with probability $p_I + p_Z$ and flip it to $\ket{1}$ with probability $p_X + p_Y$. This applies to the other 5 states as well but with different probability. Now, if we consider the case where the noise is a depolarizing channel with probability of $p$, then the probability of flipping for every Pauli eigenstate is the same. Let $\rho$ denotes a Pauli eigenstate, $\rho'$ be the state orthogonal to it, $p_0$ be the probability that the depolarizing channel preserves $\rho$ and $p_1$ be the probability of flipping to $\rho'$. Now, substitute this into the left hand side of \cref{lem:pauli evalue est}, we have that 
\begin{align}
    &\underset{\rho \sim \mathcal{D}^0}{\mathbb{E}} \tr(P \mathcal{P}(p_0 \rho + p_1 \rho')) \tr (P (p_0\rho + p_1 \rho'))\\
    = &\underset{\rho \sim \mathcal{D}^0}{\mathbb{E}} p_0^2 \tr (P\mathcal{P}(\rho))\tr (P\rho) + p_0p_1 \tr (P\mathcal{P}(\rho))\tr (P\rho') + p_0p_1 \tr (P\mathcal{P}(\rho'))\tr (P\rho) + p_1^2 \tr (P\mathcal{P}(\rho'))\tr (P\rho') \\
    = &\underset{\rho \sim \mathcal{D}^0}{\mathbb{E}} p_0 (p_0 - p_1) \tr (P\mathcal{P}(\rho))\tr (P\rho) - p_1(p_0 -p_1)\tr (P\mathcal{P}(\rho'))\tr (P\rho') \\
    = &\underset{\rho \sim \mathcal{D}^0}{\mathbb{E}} p_0 (p_0 - p_1) \tr (P\mathcal{P}(\rho))\tr (P\rho) - p_1(p_0 -p_1)\tr (P\mathcal{P}(\rho))\tr (P\rho) \\
    = &\underset{\rho \sim \mathcal{D}^0}{\mathbb{E}} (p_0-p_1)^2 \tr (P\mathcal{P}(\rho))\tr (P\rho) \\
    = &(p_0 - p_1)^2 \frac{1}{3^{|P|}} \lambda_P,
\end{align}
where in the second line we use linearity of trace and $\mathcal{P}$, in the third line we use $\tr(P\rho) = - \tr(P \rho')$, in the fourth line we use linearity of expectation and the fact that $\rho'$ has the same distribution as $\rho$. Finally, in the last line we used the result of \cref{lem:pauli evalue est}. This means that when we are estimating using \cref{eqn:xp} to \cref{eqn:lambdap}, we would pick up an extra factor of $(p_0 - p_1)^2$ to all coefficients (expect the one for identity where we just set it to 1). This factor can be obtained if we perform the whole procedure with only state preparation and measurement, and use it to estimate known quantity of traceless observable, for example estimate the expectation value of $\ketbra{+}{+}$ with respect to $X$. Their ratio is our estimation for the prefactor, and we can divide it out from our estimate to remove the effect of state preparation error.
\end{document}